\documentclass[prx,twocolumn,english,superscriptaddress,floatfix,longbibliography]{revtex4-2}
\usepackage{array,amsmath,amssymb,amsthm,tabularx,tikz,tikz-cd,braket,xcolor,appendix,quantikz,algorithm,algpseudocode}
\usepackage[margin=1in]{geometry}
\usepackage{hyperref}
\definecolor{mylinkblue}{HTML}{0000EE} % classic link blue
\hypersetup{
  colorlinks=true,
  linkcolor=mylinkblue,   % refs to sections/figures
  citecolor=mylinkblue,   % \cite
  urlcolor=mylinkblue,    % \url and \href{http://...}
  anchorcolor=mylinkblue
}
\usepackage{algorithmicx}
\usepackage{algpseudocode}
\usetikzlibrary{arrows.meta,decorations.pathmorphing,positioning}
\newtheorem{lemma}{Lemma}
\newtheorem{theorem}{Theorem}

\newtheorem{definition}{Definition}

\begin{document}

\title{Compile-once block encodings for masked similarity-transformed effective Hamiltonians}

\author{Bo Peng}
\email{peng398@pnnl.gov}
\affiliation{Physical and Computational Sciences Directorate, Pacific Northwest National Laboratory, Richland, Washington 99354, United States}

\author{Yuan Liu}
\affiliation{Department of Electrical and Computer Engineering, Department of Computer Science, Department of Physics, North Carolina State University, Raleigh, North Carolina 27695, USA}

\author{Karol Kowalski}
\affiliation{Physical and Computational Sciences Directorate, Pacific Northwest National Laboratory, Richland, Washington 99354, United States}
\date{\today}

\begin{abstract}
\noindent We present \texttt{COMPOSER}, a compile-once modular parametric oracle for similarity-encoded effective reduction of electronic-structure operators (e.g., Schrieffer--Wolff-type constructions). Low-rank factorizations compress Hamiltonians and anti-Hermitian generators into rank-one bilinear and projected-quadratic ladders with {\color{black}near-linear scaling at fixed thresholds}; each ladder admits deterministic, number-conserving preparation and a block encoding using constant number of signal ancillas. A fixed \texttt{PREP-SELECT-PREP}$^\dagger$ template multiplexes these ladders, and one QSP polynomial performs the spectral transformation with degree set by operator norms. {\color{black}For a fixed orbital pool and qubit register,} the two-qubit fabric is compiled once; geometry, active-space {\color{black}(mask)} updates, and truncations are absorbed by re-dialed single-qubit rotations. We introduce a mask-aware similarity-sandwich effective-Hamiltonian construction and benchmark stability under low-rank and second-order-perturation-guided screening. \texttt{COMPOSER} is an execution architecture: algorithmic errors (block-encoding and QSP approximation) are tunable for any supplied parameters, while physical accuracy depends on how those parameters are obtained if not refined.
\end{abstract}

\maketitle

\tableofcontents

%%%%%%%%%%%%%%%%%%%%%%%%%%%%%%%%%

\section{Introduction}

Quantum computation promises to transform how we model and understand many-body quantum systems, particularly in regimes where classical approaches encounter exponential scaling bottlenecks. In quantum chemistry and materials science, core tasks such as computing ground- and excited-state energies, simulating real-time dynamics, and extracting mechanistic insight from correlated electronic structure remain computationally demanding even on the most advanced classical platforms \cite{Lloyd1996UniversalQuantumSimulators,AbramsLloyd1997FermiSimulation,Georgescu2014QuantumSimulationRMP,AspuruGuzik2005ScienceQChem,Reiher2017Elucidating,vonBurg2020Catalysis,Cao2019QuantumChemistryReview,McArdle2020RMP,Bauer2020ChemRev,McClean2020OpenFermion,vonBurg2022THC}. Beyond static ground-state properties, there is growing interest in quantum algorithms for dynamics, spectroscopy, transport, and temperature-dependent phenomena, which place additional demands on both algorithmic design and circuit-level efficiency \cite{Kassal2008ChemicalDynamics,Whitfield2011ElectronicStructureSim,Bidart2025QuantumChemReview,Tilly2022PhysRep,Berry2025QFMM,daJornada2025ComprehensiveFramework}.
Beyond phase-estimation-based approaches, NISQ-era methods such as quantum imaginary-time evolution and Krylov/Lanczos-type procedures provide alternatives for ground- and excited-state estimation \cite{Motta2020QITE,Seki2020QuantumLanczos}.

As fault-tolerant quantum computing (FTQC) architectures begin to take shape, attention has increasingly shifted from asymptotic algorithmic scaling alone toward the full execution stack required for scientifically relevant workloads. This includes state preparation, operator encoding, time evolution, and observable estimation, as well as the classical--quantum feedback loops that arise in adaptive and iterative workflows \cite{Preskill2018NISQ,Wecker2014GateCountQChem,Kandala2017HardwareEfficientVQE,Bharti2022NISQReview,Cerezo2021VQAReview,Endo2021ErrorMitigationReview,Sugisaki2025QSCI,Zhang2025FaultTolerantSurvey,daJornada2025ComprehensiveFramework}. In this setting, circuit compilation, data loading, and reuse across closely related problem instances can dominate wall-clock cost and strongly influence practical feasibility.
A central tension therefore emerges: electronic-structure operators possess rich algebraic structure, such as low-rank factorizations, excitation hierarchies, and subspace locality, while quantum circuit implementations typically flatten these structures into instance-specific gate sequences. As a result, closely related problem instances often trigger structural recompilation of multiplexing trees and routing layers, even when the underlying operator algebra changes only parametrically.
{\color{black}
A second, closely related source of recompilation overhead arises in \emph{subspace diagonalization} and \emph{subspace expansion} algorithms. 
In this family of methods one constructs an effective Hamiltonian and overlap matrix,
$H_{ij}=\langle \psi_i|\hat H|\psi_j\rangle$ and $S_{ij}=\langle \psi_i|\psi_j\rangle$, in a (generally non-orthogonal) basis of parametrized states $\{|\psi_i\rangle\}_{i=1}^K$ and solves a generalized eigenvalue problem to estimate energies.
Representative examples include quantum subspace expansion (QSE)~\cite{McClean2017QSE}, non-orthogonal VQE~\cite{Huggins2020NonOrthogonalVQE}, and NOQE-style approaches that avoid on-device optimization~\cite{Baek2023NOQE}. 
Our recent generator-coordinate-inspired methods (GCIM) and adaptive variants further emphasize the need to repeatedly \emph{add} basis states and re-evaluate matrix elements as the working subspace grows~\cite{Zheng2023QuantumGCM,Zheng2024UnleashedGCIM}. 
Across such updates, the underlying state-preparation circuits typically differ only by which generators are activated and by their scalar coefficients, making them an ideal target for a compile-once, mask-aware execution model.
}

{\color{black}
More broadly, many scientific workflows require \emph{effective} descriptions obtained by integrating out high-energy degrees of freedom or restricting to a chemically relevant model space. In the Schrieffer--Wolff (SW) paradigm~\cite{SchriefferWolff1966,Bravyi2011SchriefferWolff} one seeks a near-identity unitary $e^{\hat\sigma}$ (with anti-Hermitian $\hat\sigma$) that approximately block-diagonalizes the Hamiltonian $\hat H$ with respect to projectors $P$ and $Q=1-P$, leading to an effective operator of the form $H_{\mathrm{eff}} = P e^{-\hat\sigma}\hat H e^{\hat\sigma} P$ (and related projection-based variants)~\cite{Okubo1954Diagonalization,BlochHorowitz1958,Feshbach1958Projection}. Such effective-Hamiltonian constructions underpin downfolding and embedding strategies across chemistry and materials science~\cite{YanaiChan2007CanonicalTransformation,Knizia2012DMET,Georges1996DMFT,Wesolowski1993FDE,JacobNeugebauer2014SubsystemDFT}, including projector-based formalisms and canonical/similarity transformations. Importantly, SW can be viewed as a controlled low-order (or single-step) limit of \emph{continuous} unitary flow-equation / similarity-renormalization approaches~\cite{Wegner1994FlowEq,GlazekWilson1993RenormHam,Kehrein2006FlowEqBook}, where the generator and truncation masks are updated iteratively while the operator basis is kept structured. In these settings, the operator algebra may remain stable while only numerical data (coefficients, thresholds, or masks) changes across a family of related instances.}

At the heart of many quantum algorithms for chemistry and many-body physics lies Hamiltonian simulation: the problem of encoding and evolving a high-dimensional operator with controlled error and resource overhead. Foundational approaches such as linear-combination-of-unitaries methods, quantum phase estimation, and qubitization have established optimal asymptotic bounds and motivated the modern framework of block encoding and polynomial spectral transformations, including quantum signal processing (QSP) and quantum singular value transformation (QSVT) \cite{Somma2002SimulatingPhysicalPhenomena,AbramsLloyd1999EigenvalueEstimation,Childs2012LCU,Berry2015Taylor,LowChuang2017QSP,Gilyen2019QSVT,Low2019Qubitization,Childs2018QuantumSimulationSpeedup}. Product-formula (Trotter--Suzuki) simulation remains an important complementary approach, with modern commutator-scaling error analyses \cite{Childs2021TrotterError}. Subsequent work has extended these ideas to fault-tolerant settings, structured Hamiltonians, and spectrum amplification techniques, further reinforcing block encoding as a unifying abstraction layer between physics and hardware \cite{Low2025SpectrumAmplification,Rocca2024SCDF,Turner2025BlockEncodingStructured,Liu2024EfficientBlockEncoding}.

In parallel, a large body of work has focused on reducing constant factors by exploiting physical structure in electronic-structure Hamiltonians. It is now well established that molecular Hamiltonians admit aggressive low-rank factorizations of the two-electron tensor, including pivoted Cholesky decompositions, tensor hypercontraction, and nested singular-value decompositions \cite{Whitten1973RI,Vahtras1993RI,BeebeLinderberg1977,Aquilante2007Cholesky,RoeeggenJohansen2008Cholesky,2020DFReview,Hohenstein2012THC1,Hohenstein2012THC2,Peng2017Cholesky,Berry2019ArbitraryBasis,motta2020lowrank}. These representations compress the Hamiltonian into collections of rank-one or low-rank operators whose total count scales nearly linearly with system size for chemically relevant thresholds, enabling deterministic, number-conserving circuit constructions based on Givens rotations and fermionic swap networks \cite{Kivlichan2018FermionicSwap,Babbush2018LowDepth,Babbush2018EncodingElectronicSpectra}.

Related low-rank structure also arises in correlated wavefunction methods. Coupled-cluster and unitary coupled-cluster generators, particularly at the doubles level, can be reshaped and factorized into rank-one excitation channels using singular-value and eigen-decomposition techniques \cite{Parrish2019RankReducedCC,Hohenstein2022RankReducedCC,Peruzzo2014VQE,McClean2016TheoryVQE,OMalley2016ScalableQChem,McClean2017QSE}. These decompositions reveal a shared algebraic backbone between Hamiltonians and similarity-transformation generators, suggesting that both objects can be treated within a unified operator-encoding framework. At the same time, perturbative methods such as MP2~\cite{MollerPlesset1934,helgaker2000mp2} often capture the dominant excitation subspaces at low cost, providing natural guidance for truncation, screening, and adaptive refinement strategies.

Despite these advances, most existing quantum simulation workflows still treat each block encoding as an instance-specific artifact. Changes in molecular geometry, basis set, active space, or truncation thresholds typically require recompiling large portions of the circuit, even when the underlying operator algebra is unchanged. This recompilation overhead complicates geometry scans, active-space growth, and adaptive downfolding strategies, and obscures the separation between circuit topology and numerical data that is central to scalable quantum software design \cite{Nam2020GroundStateCompilation,Cowtan2020QubitRouting,Sivarajah2020tket,Zulehner2018MappingIBM,Grimsley2019ADAPTVQE}. The conceptual reformulation pursued here—shifting from flat circuit execution to explicit operator-level structure control—is illustrated schematically in Figure~\ref{fig:motivation}.

Meanwhile, several end-to-end quantum simulation frameworks have emerged that aim to integrate state preparation, Hamiltonian encoding, and time evolution into cohesive pipelines suitable for early fault-tolerant devices \cite{Berry2025QFMM,daJornada2025ComprehensiveFramework}. These efforts emphasize that state preparation and data access can be resource drivers comparable to Hamiltonian simulation itself, motivating continued work on deterministic loading schemes, eigenstate preparation, and efficient amplitude encoding \cite{Berry2018EigenstatePrep,GundlapalliLee2021StatePrep,Park2019QRAM}. However, these frameworks typically optimize resource counts for a given problem instance, whereas how to preserve a fixed logical circuit topology across a sequence of related instances encountered in geometry scans, active-space growth, or adaptive similarity transformations is still an unexplored area. In particular, the explicit decoupling of fixed circuit topology from instance-dependent numerical parameters is often implicit rather than an architectural principle.

This work addresses that gap by introducing \texttt{COMPOSER} (Compile-Once Modular Parametric Oracle for Similarity-Encoded Effective Reduction), a block-encoding architecture designed to explicitly separate circuit topology from numerical data. {\color{black}From an SW perspective, \texttt{COMPOSER} provides a compile-once execution stack for repeatedly constructing and updating block encodings of effective Hamiltonians induced by masked similarity transformations and projected model spaces.}
Starting from nested low-rank factorizations of molecular Hamiltonians and anti-Hermitian similarity generators, \texttt{COMPOSER} reduces both objects to linearly scaling collections of rank-one bilinear and projected-quadratic ladders. Each ladder is implemented using a deterministic, number-conserving adaptor circuit and wrapped into a single-ancilla block encoding. These adaptors are assembled within a fixed \texttt{PREP-SELECT-PREP}$^\dagger$ skeleton whose two-qubit connectivity and ancilla usage are synthesized once and then frozen. {\color{black}For any supplied set of coefficients and masks, the resulting block-encoding and polynomial-approximation errors are systematically controllable via factorization thresholds, rotation precision, and polynomial degree.}

We emphasize that \texttt{COMPOSER} is not a new Hamiltonian-simulation algorithm in the asymptotic sense. Instead, it is an architectural execution model for block-encoded chemistry operators: the two-qubit multiplexing, routing, and signal-processing skeleton is synthesized once and then reused verbatim, while updates to molecular geometry, basis/active space choices, and classical truncation masks enter only through re-dialed single-qubit rotations and classical control data. 

%\textit{What is actually new in ``compile--once''?}
Although many block-encoding constructions reuse a generic \texttt{PREP-SELECT-PREP}$^\dagger$ scaffold, the \emph{compile--once} claim in \texttt{COMPOSER} is about \emph{topology invariance across a family of related downfolded problems}: a single compiled two-qubit fabric supports geometry updates, active-space/model-space masks, and generator truncation masks inside a similarity-sandwiched effective Hamiltonian.
This is enabled by three coupled design choices: (1) a \emph{mask-aware similarity sandwich} that treats $P^{(m)}e^{-\hat\sigma^{(m)}}\hat H e^{\hat\sigma^{(m)}}P^{(m)}$ as the primary encoded object; (2) a \emph{shared rank-one ladder operator language} for both $\hat H$ and $\hat\sigma$, so the same adaptor bank applies to Hamiltonians and generators; and (3) \emph{deterministic, number-conserving ladder realizations} (Givens/pair--Givens plus fixed routing) whose two-qubit pattern is fixed once an orbital pool and pivot choices are fixed. As a result, instance dependence is isolated to streamed coefficients and single-qubit rotation angles, while the selector tree, adaptor wiring, and QSP scaffold are compiled once and reused verbatim.

The current ``compile-once'' direction directly targets the problems encountered in related approaches. For example, in qROM/data-lookup LCU implementations, the data-access structure (e.g., qROM layout and its multiplexers) is typically built for a specific coefficient table; changing truncation patterns or term lists commonly triggers rebuilding the data-loading circuitry even when the high-level template is unchanged \cite{Park2019QRAM}. In Pauli-sum/Trotter pipelines, instance updates change the Pauli list and its grouping/ordering, so simulation and measurement circuits are regenerated and re-routed \cite{Childs2021TrotterError}. In variational settings, ``compile-once'' can refer to a fixed ansatz skeleton with updated parameters, but this targets state-preparation circuits and does not by itself enforce oracle-level topology invariance under active-space and similarity-generator masks \cite{Kandala2017HardwareEfficientVQE,Cerezo2021VQAReview}.

While the qubitization- and LCU-based constructions establish optimal asymptotic query and gate complexities for fixed Hamiltonians, they typically treat each block encoding as an instance-specific object whose circuit realization is synthesized for a particular operator decomposition. In contrast, \texttt{COMPOSER} promotes topology invariance to a design principle: the multiplexed two-qubit circuit fabric implementing the block encoding is compiled once, and subsequent changes in operator coefficients, truncation masks, or active spaces are absorbed entirely into updated single-qubit parameters and classical selector data.

Within this fixed topology, all instance dependence—including molecular geometry, basis choice, active-space expansion, truncation masks, and similarity-transformation targets—enters solely through single-qubit rotation angles. 
{\color{black} Importantly, these angles need not be sourced from coupled-cluster amplitudes: for Hamiltonian block encodings they are typically determined by numerically controlled integral factorizations (e.g., Cholesky/density fitting/tensor hyper-contraction ranks and thresholds), while coupled-cluster or tensor-network amplitudes provide only one convenient \emph{initialization} for optional similarity generators or masking heuristics. For any fixed choice of encoded operators, the residual block-encoding and QSP approximation errors are independently and systematically controllable via factorization rank/thresholds, rotation precision, and polynomial degree. {\color{black}If all angles are sourced from a classical approximate object and never refined, then the achievable physical accuracy is ultimately bounded by that approximation even when the algorithmic simulation error is tunable.}}
A single quantum signal processing polynomial implements time evolution {\color{black}(and more general spectral transformations)} for the resulting masked operator sums, with polynomial degree determined by spectral norms rather than term count \cite{MotlaghWiebe2024GeneralizedQSP,BerryMotlaghWiebe2024DoublingGQSP}. {\color{black}Similarity updates enter through re-parameterizing the encoded operator (the mask-aware similarity-sandwich), not by treating conjugation itself as a spectral function of the Hamiltonian.} This compile-once, tune-many design enables adaptive workflows with predictable resource bounds while eliminating recompilation of the fixed two-qubit circuit topology across closely related problem instances, with only single-qubit parameters updated between instances.

The focus of this work is the construction and reuse of a fixed-topology block-encoding architecture---and the resulting mask-aware, compile-once execution model---rather than proposing a new simulation primitive or reporting end-to-end chemical accuracy benchmarks. We introduce deterministic rank-one adaptor circuits, develop a mask-aware similarity-sandwich construction for effective Hamiltonians, analyze depth and ancilla tradeoffs, and validate the stability of the underlying rank-one operator structure using numerical experiments and MP2-guided subspace selection. Together, these elements establish \texttt{COMPOSER} as a hardware-native block-encoding framework that supports efficient adaptive quantum simulation in both NISQ demonstrations and early fault-tolerant regimes.
\begin{figure*}[t]
\centering
\includegraphics[width=0.96\textwidth]{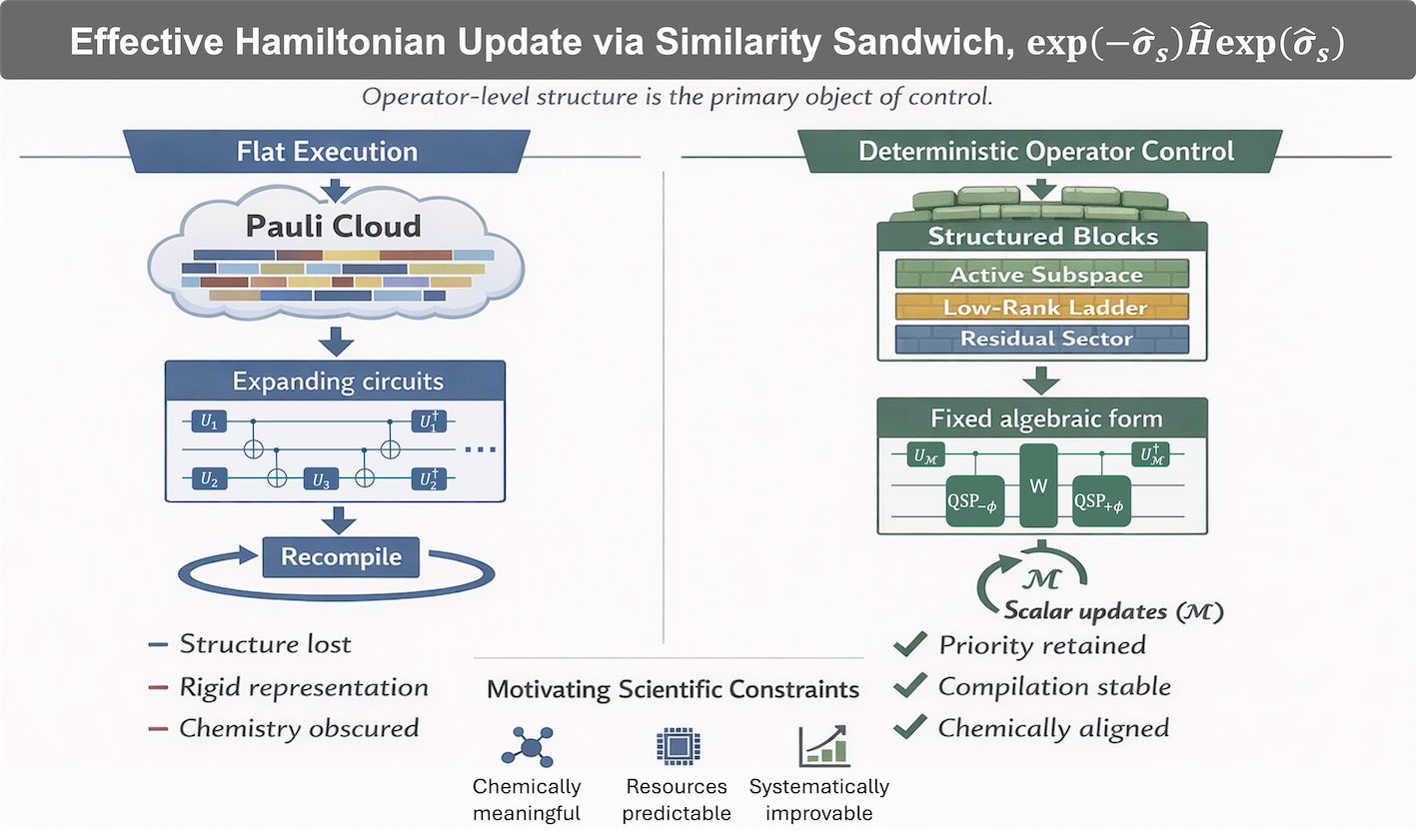}
\vspace{-0.2cm}
\caption{Conceptual motivation for operator-level structure control. Conventional workflows flatten second-quantized operators ($\hat H$, $\hat T$, $\hat \sigma$, \ldots) into unstructured Pauli expansions, leading to expanding circuit constructions and repeated recompilation during iterative optimization or instance updates. In this flat execution paradigm, excitation hierarchy and chemically meaningful subspaces are obscured at the circuit level. In contrast, the \texttt{COMPOSER} architecture treats operator structure as the primary object of control. Nested low-rank factorizations yield structured blocks (e.g., active subspaces, low-rank ladders, residual sectors) that are embedded within a fixed algebraic topology. Instance dependence enters only through scalar parameter updates, enabling stable compilation and predictable resource scaling. This shift from gate-level optimization to operator-level control, implemented as a fixed-topology block-encoding architecture with streamed parameters, is guided by three motivating scientific constraints: preserving chemical meaning, ensuring predictable resource growth, and enabling systematic hierarchy refinement.}
\label{fig:motivation}
\end{figure*}
%%%%%%%%%%%%%%%%%%%%%%%%%%%%%%%%%%%%%%%%

\section{Rank-One Representations of Quantum-Chemistry Operators}
\label{sec:rank_1_rep}

Throughout this section we use standard second-quantized notation for fermionic creation and annihilation operators acting on an orthonormal spin-orbital basis of size $M_{\mathrm{so}}$. The system register consists of $n = M_{\mathrm{so}}$ qubits, while ancillary qubits are introduced only to realize block encodings, selector registers, and quantum signal processing (QSP) primitives. Rank-one operators are labeled by an index $s$, and collections of such operators are assembled using binary-addressed selector registers. A classical \emph{mask} $\mathcal M^{(m)}$ specifies subsets of rank-one terms retained in truncated generators or effective Hamiltonians. Symbols used throughout the manuscript are summarized in Table~\ref{tab:notation}.

\begin{table*}
\centering
\begin{tabular}{ll}
\hline
Symbol & Meaning \\ \hline 
$M_{\mathrm{so}}$ & \# spin-orbitals in the chosen basis   \\
$n$               & \# qubits in the \emph{system} register ($n=M_{\mathrm{so}}$) \\
$N_O, N_V$        & \# occupied and virtual spin orbitals in the chosen active space \\
$a^{\dagger},a$ & Fermionic creation/annihilation operators    \\
$\hat a^{\dagger},\hat a$ & Linear combination of Fermionic creation/annihilation operators    \\
$\lambda$         & Complex prefactor of a rank-one operator \\
$\hat L_s$ & Rank-one operator \\
$\hat{A}_s=\hat L_s-\hat L_s^\dagger$ & anti-Hermitian rank-one term \\
$\hat{\mathbb A}_s=i(\hat L_s-\hat L_s^\dagger)$ & Hermitian rank-one term \\
$\omega_s,\Omega_s\in\mathbb R$ & Real coefficient of a rank-one operator \\
$\ell_H$            & \# rank-one operators in Hamiltonian factorization         \\
$\ell_\sigma$       & \# rank-one terms in anti-Hermitian operator $\hat \sigma$ \\
$\mathcal M^{(m)}$ & Classical \emph{mask} selecting labels $s$  \\
$\alpha$         & Normalization constant of a block encoding  \\
$\mathfrak{a}=\max(\lceil\log_2 \ell_H\rceil,\lceil\log_2 \ell_\sigma\rceil)$ & Selector-register width \\
$d$              & Quantum signal processing (QSP) polynomial degree \\
$U_{\texttt{prep}},W_{\texttt{sel}}$ & \texttt{PREP} loader and \texttt{SELECT} cascade  \\ \hline
\end{tabular}
\caption{Summary of notation and key algebraic symbols used in the \texttt{COMPOSER} framework and rank-one block-encoding constructions.{\color{black}
Unless otherwise stated we absorb phases into single-particle vectors so that Hamiltonian and generator weights ($\Omega_s,\omega_s$) may be taken real.}
}\label{tab:notation}
\end{table*}

\subsection{Rank-one operators}
\label{subsec:rank1}

We begin by defining a common notion of \emph{rank-one operators} in operator space, which will serve as the fundamental building blocks for Hamiltonians, similarity generators, and their block encodings throughout this work.
{\color{black}
We assume a standard second-quantized fermion-to-qubit encoding with one qubit per spin-orbital (e.g., Jordan--Wigner or parity mappings), so $n=M_{\mathrm{so}}$. The deterministic number-conserving adaptors used later are built from Givens-rotation primitives and fermionic swap-network routing \cite{Kivlichan2018FermionicSwap,Babbush2018LowDepth}.}

\begin{definition}[Rank-one bilinear operator]
Let $u,v\in\mathbb{C}^{M_{\mathrm{so}}}$ be normalized single-particle coefficient vectors, 
{\color{black}
\begin{align}
u&=(u_1,u_2,\dots,u_{n})^{\mathsf T},~~
v=(v_1,v_2,\dots,v_{n})^{\mathsf T},
\end{align}
with $\sum_{p=1}^{n}|u_p|^2=1$ and $\sum_{q=1}^{n}|v_q|^2=1$.}  $\lambda\in\mathbb{C}$ a scalar. Define the projected creation and annihilation operators
\begin{align}
\hat a_{u}^{\dagger}=\sum_{p=1}^{M_{\mathrm{so}}}u_{p}\,a_{p}^{\dagger}, ~~~
\hat a_{v}          =\sum_{q=1}^{M_{\mathrm{so}}}v_{q}^{*}\,a_{q}, \label{eq:rank1_bilinear}
\end{align}
with the subscript $u$ a \emph{label for the whole vector}, not an orbital index, and set
\begin{align}
\hat L^{(I)} ~=~ \lambda\,\hat a_{u}^{\dagger}\hat a_{v}. \label{eq:L_one_body}
\end{align}
The matrix elements $L_{pq}=\lambda u_{p}v_{q}^{*}$ form an outer product of two length-$M_{\mathrm{so}}$ vectors. The term ``rank-one'' here refers to the outer-product structure of the coefficient tensor in the single-particle operator basis, not to the rank of the induced operator on the many-body Hilbert space.
\end{definition}

{\color{black}\noindent\textbf{Remark (canonical mode transformations).}
The linear redefinition of fermionic modes in Eq.~\eqref{eq:rank1_bilinear} is a particle-number-conserving special case of a general fermionic canonical (Bogoliubov--Valatin) transformation~\cite{Bogoliubov1958NewMethodSuperconductivity,Valatin1958CommentsSuperconductivity}, obtained by restricting to transformations that do not mix creation and annihilation operators.}

\begin{definition}[Rank-one pair-excitation operator]
Let $U\in\mathbb{C}^{N_V\times N_V}$ and $V\in\mathbb{C}^{N_O\times N_O}$ be antisymmetric tensors on the virtual and occupied subspaces, respectively. Define the pair creation and annihilation operators
\begin{align}
\hat B^\dagger[U]=\sum_{a<b} U_{ab}\, a^\dagger_a a^\dagger_b,~~
\hat B[V]=\sum_{i<j} V_{ij}^{*} \, a_j a_i .\label{eq:pair_ops}
\end{align}
For $\lambda\in\mathbb{C}$, define
\begin{align}
\hat L^{(II)} ~\equiv~ \lambda\, \hat B^\dagger[U]\, \hat B[V].
\end{align}
Although $\hat L$ is quartic in fermionic operators, it is rank-one in the composite pair-excitation index $(ab,ij)$, since its coefficient tensor factorizes as $L_{ab,ij}=\lambda\,U_{ab}V_{ij}$.
\end{definition}

\noindent\textbf{Remark.} If $x,y\in\mathbb{C}^{N_V}$ and $r,s\in\mathbb{C}^{N_O}$, one may take $U_{ab}=(x_a y_b-x_b y_a)/\sqrt{2}$ and $V_{ij}=(r_i s_j-r_j s_i)/\sqrt{2}$, yielding $\hat B^\dagger[U]=a^\dagger(x)~ a^\dagger(y)$ and $\hat B[V]=a(s)~a(r)$, where
\begin{align}
    a^\dagger(x) &= \sum_{a}x_a a_a^\dagger,~~a^\dagger(y) = \sum_{b}y_b a_b^\dagger, \\
    a(s) &= \sum_{j} s_j^{*} a_j,~~a(r) = \sum_{i} r_i^{*} a_i.
\end{align}
{\color{black}
Equivalently, $U$ and $V$ may be viewed as vectors in the antisymmetric pair spaces $\mathbb{C}^{N_V\times N_V}$ and $\mathbb{C}^{N_O\times N_O}$.}
This form directly maps to the two-electron deterministic ladders constructed in Sec.~\ref{subsec:prep_two_electron} and also discussed in Refs.~\citenum{Kivlichan2018FermionicSwap,Babbush2018LowDepth,Babbush2018EncodingElectronicSpectra}. {\color{black} 
Without loss of generality one may normalize $u$ and $v$ and absorb their norms into $\lambda$, which is convenient for deterministic state preparation.}

\begin{definition}[Projected quadratic rank-one operator]
This construction captures Jastrow-like~\cite{Jastrow1955} quadratic occupation operators in a rank-one form. 
Let $\{u^{(r)}\}_{r=1}^{R}\subset\mathbb{C}^{M_{\mathrm{so}}}$ be a collection of single-particle coefficient vectors and define
$\hat a^\dagger(u^{(r)})=\sum_{p=1}^{M_{\mathrm{so}}}u^{(r)}_{p}\,a_p^\dagger$ and $\hat n_r=\hat a^\dagger(u^{(r)})\hat a(u^{(r)})$.
For coefficients $\mathbf C=(C_1,\dots,C_{R})\in\mathbb{C}^{R}$ define
\begin{align}
\hat O
&= \sum_{r=1}^{R}C_r\,\hat n_r, \\
\hat L^{(III)}
&= \hat O\,\hat O^\dagger.
\end{align}
Although $\hat L^{(III)}$ expands into many quartic monomials, its coefficient matrix over the $(r,r')$ index pair is the outer product $C_r C_{r'}^*$, hence rank-one in that index space.
\end{definition}

\noindent {\color{black} In what follows we write $\hat L$ for any rank-1 ladder, omitting the superscripts $(I)-(III)$; the appropriate type is determined by the accompanying index set (one-electron vs pair vs number-conserving) and the surrounding discussion.}

\subsection{Rank-one representation of the molecular Hamiltonian}
\label{subsec:NF}

{\color{black}
This subsection shows that the molecular Hamiltonian can be organized as a linear combination of rank-one operator ladders whose \emph{count} is empirically near-linear in $M_{\mathrm{so}}$ at fixed integral-factorization thresholds.}

The electronic Hamiltonian in second quantization is
\begin{align}
  \hat H ~=~
    \sum_{pq}h_{pq}\,a^{\dagger}_{p}a_{q}
    ~+~\tfrac12\sum_{pqrs}\langle pq|rs \rangle\,a^{\dagger}_{p}a^\dagger_{q}a_{s}a_{r}
    ~+~E_{\mathrm{nn}}.
\end{align}
We use the physicist’s (non-antisymmetrized) two-electron integrals $\langle pq|rs \rangle$, since fermionic antisymmetry is enforced by operator ordering and this form is directly amenable to Cholesky factorization. The nuclear-repulsion constant $E_{\mathrm{nn}}$ contributes only an overall energy shift and may therefore be dropped or tracked separately.

Applying a pivoted Cholesky decomposition to the two-electron tensor yields \cite{Whitten1973RI,Vahtras1993RI,BeebeLinderberg1977,Aquilante2007Cholesky,RoeeggenJohansen2008Cholesky,2020DFReview}
\begin{align}
  \langle pq|rs \rangle ~\approx~\sum_{\mu=1}^{K}L_{pr}^{\mu}L_{qs}^{\mu},
  \label{eq:cholesky}
\end{align}
where each $L^{\mu}$ is an $M_{\mathrm{so}}\times M_{\mathrm{so}}$ symmetric matrix and $K$ is chosen according to a prescribed threshold. Related integral-compression approaches such as density fitting / resolution-of-the-identity and tensor hypercontraction provide alternative low-rank representations of the same two-electron tensor and have been widely used in classical electronic-structure theory \cite{Dunlap2000RobustFitting,Hohenstein2012THC1,Hohenstein2012THC2}. After the standard mean-field shift
\begin{align}
  \tilde{h}_{pq} = h_{pq} - \tfrac12 \sum_s \langle pq|ss \rangle,
\end{align}
both $\tilde{h}$ and each Cholesky factor $L^{\mu}$ may be diagonalized in their respective single-particle bases. Note that the diagonalizing rotation is generally $\mu$-dependent, so the operators $\hat n_{\mu\xi}$ below live in different one-particle bases for different $\mu$.

Rewriting the Hamiltonian in the resulting rotated orbitals yields~\cite{Berry2019ArbitraryBasis,motta2020lowrank,vonBurg2022THC,2020DFReview}
\begin{align}
  \hat H
  &=\sum_{\eta=1}^{R_{1}}\kappa_{\eta}\,\hat a_{\eta}^{\dagger}\hat a_{\eta}
    +\tfrac12\sum_{\mu=1}^{K}
      \Bigl(\sum_{\xi=1}^{R_\mu} \lambda_{\xi}^{\mu}\hat n_{\mu\xi}\Bigr)^2, \label{eq2:quadratic_rank_1}
\end{align}
{\color{black}
Here $R_1\le M_{\mathrm{so}}$ denotes the number of retained eigenmodes of $\tilde h$ after optional truncation; without truncation $R_1=M_{\mathrm{so}}$. Similarly, $R_\mu$ is determined by the retained spectral rank of the $\mu$th factor after thresholding.}
$\hat a_{\eta}^{\dagger}$ and $\hat n_{\mu\xi}$ denote creation and occupation operators in the corresponding rotated bases. Importantly, both the one-electron and two-electron contributions are expressed using the same rank-one ladder structure introduced in Sec.~\ref{subsec:rank1}.
{\color{black}
In particular, for each Cholesky channel $\mu$ we define the diagonal one-body operator
\begin{align}
\hat O_\mu := \sum_{\xi=1}^{R_\mu}\lambda^{(\mu)}_\xi\,\hat n_{\mu\xi},
\end{align}
so that the channel contribution is the {\em projected quadratic rank-one} operator
\begin{align}
\hat L_\mu := \hat O_\mu \hat O_\mu^\dagger = \hat O_\mu^2,
\end{align}
which is precisely of the form in {\bf Definition~3} with coefficient vector $C=\lambda^{(\mu)}$.}

Collecting terms, the Hamiltonian may therefore be written compactly as
\begin{align}
  \hat H = \sum_{s=1}^{\ell_H}\Omega_{s}\hat L_{s},
  \label{eq:H_rank_1}
\end{align}
where each $\hat L_s$ is a rank-one operator and
\begin{align}
  \ell_H = R_1 + K.
  \label{eq:Ham_rank1}
\end{align}
For a fixed Cholesky threshold, empirical studies typically find that $K$ (and hence $\ell_H$) grows approximately linearly with $M_{\mathrm{so}}$ for a wide range of molecular systems \cite{Berry2019ArbitraryBasis,Rocca2024SCDF}, which is the key structural property that enables fixed-topology block encodings in the \texttt{COMPOSER} architecture.

\subsection{Rank-one representation of the cluster generator}
\label{subsec:cluster}

We now show that the anti-Hermitian coupled-cluster doubles generator admits a rank-one decomposition compatible with the operator ladders introduced in Sec.~\ref{subsec:rank1}. Throughout this subsection we adopt the convention that occupied and virtual indices are explicitly antisymmetrized.

We focus on the doubles excitation operator,
\begin{align}
\hat T_2
= \sum_{a<b}\sum_{i<j} t_{ab,ij}\, a_a^\dagger a_b^\dagger a_j a_i,
\end{align}
where $(i,j)$ label occupied spin-orbitals and $(a,b)$ label virtual spin-orbitals in an orthonormal spin-orbital basis. Singles contributions are inherently rank-one and may be treated analogously; higher-order excitations can be incorporated using the same principles.

{\color{black}
To expose low-rank structure while respecting fermionic antisymmetry, we view $t_{ab,ij}$ as a matrix on antisymmetric pair spaces by introducing composite indices $ab\equiv(a<b)$ and $ij\equiv(i<j)$, so that $t_{ab,ij}$ is a $\binom{N_V}{2}\times \binom{N_O}{2}$ matrix. Equivalently, for each $\mu$ the factors below may be unpacked as skew-symmetric tensors $\mathbb{M}_{ab}^{(\mu)}=-\mathbb{M}_{ba}^{(\mu)}$ and $\mathbb{N}_{ij}^{(\mu)}=-\mathbb{N}_{ji}^{(\mu)}$.}
We then perform a singular-value decomposition \cite{Parrish2019RankReducedCC,Hohenstein2022RankReducedCC,Peruzzo2014VQE,McClean2016TheoryVQE,OMalley2016ScalableQChem,McClean2017QSE},
\begin{align}
t_{ab,ij} = \sum_{\mu=1}^{R_{\mu}} \epsilon_\mu \,
\mathbb{M}_{ab}^{(\mu)} \,
\mathbb{N}_{ij}^{(\mu)},
\label{eq:T2_svd}
\end{align}
where the upper limit $R_{\mu}$ is determined by a prescribed truncation threshold for $T_2$.

{\color{black}
Each singular component is then further expressed as a sum of \emph{rank-one antisymmetric pair factors} (wedge products) within its respective subspace. Concretely, we approximate}
\begin{align}
\mathbb{M}_{ab}^{(\mu)} &\approx \sum_{\kappa=1}^{R_\kappa}
\epsilon_\kappa^{(\mu)} \, U_{ab}^{(\mu\kappa)}, \\
\mathbb{N}_{ij}^{(\mu)} &\approx \sum_{\eta=1}^{R_\eta}
\epsilon_\eta^{(\mu)} V_{ij}^{(\mu\eta)},
\end{align}
where
\begin{align}
U_{ab}^{(\mu\kappa)}=\frac{x_a^{(\mu\kappa)}y_b^{(\mu\kappa)}-x_b^{(\mu\kappa)}y_a^{(\mu\kappa)}}{\sqrt{2}}, \\
\qquad
V_{ij}^{(\mu\eta)}=\frac{r_i^{(\mu\eta)}s_j^{(\mu\eta)}-r_j^{(\mu\eta)}s_i^{(\mu\eta)}}{\sqrt{2}},
\end{align}
{\color{black}
where the antisymmetry $U_{ab}^{(\mu\kappa)}=-U_{ba}^{(\mu\kappa)}$ and $V_{ij}^{(\mu\eta)}=-V_{ji}^{(\mu\eta)}$ is explicit. (The factor of $1/\sqrt{2}$ is a convention; it may be absorbed into the vectors or coefficients depending on the chosen normalization of the pair basis.)}
This yields the explicit factorization
\begin{align}
{\color{black}
t_{ab,ij}
\approx
\sum_{\mu,\kappa,\eta}
\epsilon_\mu \epsilon_\kappa^{(\mu)} \epsilon_\eta^{(\mu)} \,
U_{ab}^{(\mu\kappa)} \,
V_{ij}^{(\mu\eta)} .}
\end{align}

{\color{black}
Employing the definition of the pair creation/annihilation operators in Eq.~\eqref{eq:pair_ops}, we can write the rank-one pair-excitation ladder as
\begin{align}
\hat L_{\mu,\kappa,\eta}\equiv \hat B^\dagger[U^{(\mu\kappa)}]\hat B[V^{(\mu\eta)}].
\end{align}}
The doubles operator therefore admits the rank-one expansion
\begin{align}
\hat T_2
\approx \sum_{\mu,\kappa,\eta}
\omega_{\mu,\kappa,\eta}\,
\hat L_{\mu,\kappa,\eta}
\end{align}
with $\omega_{\mu,\kappa,\eta}:=\epsilon_\mu \epsilon_\kappa^{(\mu)} \epsilon_\eta^{(\mu)}$.
The corresponding anti-Hermitian generator is
\begin{align}
\hat \sigma_2
&= \hat T_2 - \hat T_2^\dagger \notag \\
&\approx \sum_{\mu,\kappa,\eta}
\omega_{\mu,\kappa,\eta}
\left(
\hat L_{\mu,\kappa,\eta}
-
\hat L_{\mu,\kappa,\eta}^\dagger
\right),
\label{eq:sigma_rank_1}
\end{align}
a structured sum of rank-one ladders. {\color{black}
Phases may be absorbed into the orbital vectors so that $\omega_{\mu,\kappa,\eta}\in\mathbb{R}$, making each $\omega_{\mu,\kappa,\eta}(\hat L-\hat L^\dagger)$ manifestly anti-Hermitian.}

This representation makes explicit that the coupled-cluster generator is built from the same algebraic primitives as the Hamiltonian decomposition in Sec.~\ref{subsec:NF}. As a result, Hamiltonians and similarity generators can be treated on equal footing within the \texttt{COMPOSER} block-encoding architecture.

%%%%%%%%%%%%%%%%%%%%%%

\section{Deterministic preparation of one- and two-electron states}\label{sec:Det_prep}

In this section we construct deterministic ladder circuits that realize the rank-one operators introduced in Sec.~\ref{sec:rank_1_rep}. {\color{black}The core objects are number-conserving ladder unitaries $U_u$ built from Givens (and pair--Givens) rotations, optional routing swaps, and diagonal phase shifts.} These ladders serve a dual role: when {\color{black}preceded by a simple pivot injection} and applied to the vacuum, they prepare one- and two-electron states with prescribed amplitudes; when applied {\color{black}without injection}, they act as basis-rotation primitives inside rank-one block-encoding adaptors. The same fixed ladder topology is reused throughout the \texttt{COMPOSER} architecture.

\subsection{Deterministic preparation of the one-electron state}
\label{subsec:prep_u_state}

The target state in the one-electron sector is
\begin{align}
\lvert u\rangle
  = \hat a_u^\dagger \lvert 0^n\rangle
  = \sum_{p=1}^{n} u_p\, \lvert p\rangle ,
\label{eq:one_e_state}
\end{align}
where $\lvert p\rangle = a_p^\dagger \lvert 0^n\rangle$ with $p\in \{1,\cdots, n\}$ denotes the Slater determinant with the $p$th spin-orbital occupied.

{\color{black}
We construct a deterministic preparation circuit by composing (i) a single-electron pivot injection and (ii) a number-conserving ladder unitary. Concretely, fix a pivot orbital $r$ (chosen once to define a fixed ladder topology) and define $\lvert r\rangle = X_r\lvert 0^n\rangle$. We then build a number-conserving unitary $U_u$ from two-mode Givens rotations and diagonal phase shifts such that $U_u\lvert r\rangle=\lvert u\rangle$, so that the full preparation circuit is $U^{(u)}_{\texttt{prep}} = U_u X_r$ \cite{Kivlichan2018FermionicSwap,Babbush2018LowDepth,Babbush2018EncodingElectronicSpectra}.}
{\color{black}
For numerical stability one may choose $r$ so that $|u_r|$ is not anomalously small (e.g., $r=\arg\max_p |u_p|$), but this is not required for correctness; it only affects the conditioning of the classical angle recursion.} 

For each $p\neq r$, the number-conserving Givens rotation is defined as
\begin{align}
G_{pr}(\theta)
= \exp\left[
\theta\left(a_p^\dagger a_r - a_r^\dagger a_p\right)
\right],
\label{eq:Givens_def}
\end{align}
which acts as an $\mathrm{SO}(2)$ rotation on $\mathrm{span}\{\lvert r\rangle,\lvert p\rangle\}$. A fixed ordering of the non-pivot orbitals defines a ladder that sequentially transfers amplitude out of the pivot mode; the required rotation angles and residual single-qubit phases are computed classically from $u$. Here the data vector $u$ (and similarly $u_{pq}$ in the two-electron ladder) is supplied by the chosen operator factorization/parameterization—e.g., integral-derived low-rank factors for $\hat H$ or solver-seeded factors for $\hat \sigma$---and is not restricted to coupled-cluster doubles. Explicit recursions and an explicit product form are collected in Appendix~\ref{app:ladders}. The resulting circuit prepares $\lvert u\rangle$ deterministically; its inverse is obtained by reversing the gate order and negating all angles. {\color{black}
The one-electron ladder uses $(n-1)$ two-mode rotations (plus phases and optional routing swaps), so its gate count scales as $\mathcal{O}(n)$ for fixed connectivity assumptions.}

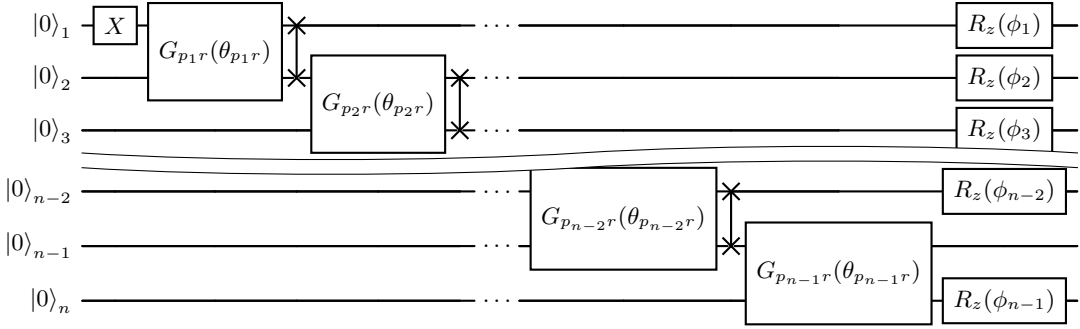
\begin{figure*}
\centering
    \begin{quantikz}[row sep=0.1cm, column sep=0.15cm, font=\small]
        \lstick{$\ket{0}_1$} 
        & \gate{X} 
        & \gate[wires=2]{G_{p_1r}(\theta_{p_1r})}  
        & \swap{1} 
        & \qw
        & \qw
        & ~\ldots~
        & \qw 
        & \qw 
        & \qw 
        & \gate{R_z(\phi_1)} 
        & \qw \\
        \lstick{$\ket{0}_2$} 
        & \qw 
        & \qw 
        & \swap{0} 
        & \gate[wires=2]{G_{p_2r}(\theta_{p_2r})} 
        & \swap{1}
        & ~\ldots~
        & \qw 
        & \qw 
        & \qw 
        & \gate{R_z(\phi_2)} 
        & \qw \\
        \lstick{$\ket{0}_3$} 
        & \qw 
        & \qw 
        & \qw 
        & \qw
        & \swap{0}
        & ~\ldots~
        & \qw 
        & \qw 
        & \qw 
        & \gate{R_z(\phi_3)} 
        & \qw \\
        \wave &&&&&&&&&&& \\
        \lstick{$\ket{0}_{n-2}$} 
        & \qw 
        & \qw 
        & \qw 
        & \qw 
        & \qw 
        & ~\ldots~ 
        & \gate[wires=2]{G_{p_{n-2}r}(\theta_{p_{n-2}r})}
        & \swap{1} 
        & \qw 
        & \gate{R_z(\phi_{n-2})} 
        & \qw \\
        \lstick{$\ket{0}_{n-1}$} 
        & \qw 
        & \qw 
        & \qw 
        & \qw 
        & \qw 
        & ~\ldots~ 
        & \qw
        & \swap{0} 
        & \gate[wires=2]{G_{p_{n-1}r}(\theta_{p_{n-1}r})} & \qw 
        & \qw \\
        \lstick{$\ket{0}_n$} 
        & \qw 
        & \qw 
        & \qw 
        & \qw 
        & \qw 
        & ~\ldots~
        & \qw
        & \qw 
        & \qw 
        & \gate{R_z(\phi_{n-1})} 
        & \qw 
    \end{quantikz}
\caption{Deterministic ladder circuit for one-electron state preparation. The ladder topology is fixed; only the angles $\{\theta_{pr}\}$ and phases $\{\phi_p\}$ depend on the data vector $u$ (see Appendix~\ref{app:ladders}). {\color{black}The interleaved swaps indicate a connectivity-aware routing pattern; when the ladder is used as a basis rotation on general $N$-electron states, these swaps should be interpreted as fermionic swap-network primitives (or an equivalent orbital-routing compilation)~\cite{Kivlichan2018FermionicSwap}.}}
\label{fig:one_electron_prep_corrected}
\end{figure*}

\noindent\textbf{Remark (number-conserving form).}
{\color{black}
By construction $U_u$ is strictly number-conserving (it preserves each fixed-$N$ sector); the preparation form differs only by the initial pivot injection $X_r$.}
This is the form reused inside rank-one block-encoding adaptors, independent of whether the system register holds the vacuum or an arbitrary $N$-electron state.

\subsection{Deterministic preparation of a two-electron state}
\label{subsec:prep_two_electron}

The two-electron ladder generalizes Sec.~\ref{subsec:prep_u_state} to antisymmetric excitation pairs~\cite{Kivlichan2018FermionicSwap,Babbush2018LowDepth,Babbush2018EncodingElectronicSpectra}. Let $u_{pq}=-u_{qp}$ be a normalized antisymmetric coefficient tensor, $\sum_{p<q}|u_{pq}|^2=1$. The target state in the two-electron sector is
\begin{align}
\lvert u\rangle
=
\sum_{p<q} u_{pq}\,\lvert pq\rangle,
~~
\lvert pq\rangle
=
a_p^\dagger a_q^\dagger \lvert 0^n\rangle .
\end{align}

{\color{black}
Fix a pivot pair $(r,s)$ with $r<s$ (chosen once to define a fixed ladder topology) and inject two particles,}
\begin{align}
X_r X_s \lvert 0^n\rangle = \lvert rs\rangle .
\end{align}
{\color{black}
Here, as in the one-electron case, choosing $(r,s)$ so that $|u_{rs}|$ is not anomalously small improves the conditioning of the classical angle/phase recursion but is not required for correctness.}
Amplitude is then redistributed from the pivot pair using phased pair--Givens rotations. For each unordered pair $\{p,q\}\neq\{r,s\}$, define
\begin{align}
G_{pq,rs}(\theta,\phi)
=
\exp\Big[
\theta\big(
e^{i\phi}\,a_p^\dagger a_q^\dagger a_s a_r
-
e^{-i\phi}\,a_r^\dagger a_s^\dagger a_q a_p
\big)
\Big],
\label{eq:pair_Givens_phased}
\end{align}
which preserves particle number and rotates $\mathrm{span}\{\lvert rs\rangle,\lvert pq\rangle\}$. As in the one-electron case, a fixed ordering of non-pivot pairs defines a deterministic ladder; angles and phases are computed classically from $\{u_{pq}\}$ (Appendix~\ref{app:ladders}). {\color{black}
Equivalently, define the number-conserving ladder $U_u$ such that $U_u\lvert rs\rangle=\lvert u\rangle$, so the full preparation circuit is $U^{(u)}_{\texttt{prep}}=U_u X_r X_s$.}
{\color{black}
The two-electron ladder contains one pair rotation per non-pivot pair (up to ordering/truncation), i.e., $\binom{n}{2}-1$ rotations in the worst case, so its size scales as $\mathcal{O}(n^2)$ before sparsity/truncation and connectivity-aware compilation.}\\

\noindent\textbf{Remark (number-conserving form).}
Omitting the initial excitation $X_r X_s$ yields a strictly number-conserving ladder unitary $U_u$ that acts as a basis rotation within the two-electron subspace. This number-conserving form is the one used inside pair-excitation adaptors in \texttt{COMPOSER}.

\medskip
\noindent\textbf{Preparation vs.\ number conservation.}
Throughout the remainder of this work, $U_u$ denotes the number-conserving ladder form. {\color{black}
When needed we disambiguate by writing $U^{(1)}_u$ (one-electron ladder) and $U^{(2)}_u$ (two-electron ladder); otherwise the intended sector is clear from context.} The preparation form from the vacuum differs only by the initial pivot injection (one $X$ gate for the one-electron ladder; two $X$ gates for the two-electron ladder). For completeness, we collect explicit product decompositions and a short discussion of these two realizations, as well as connectivity-aware routing, Givens--SWAP network constructions, and depth scaling for the two-electron ladders, in Appendix~\ref{app:ladders}.

%%%%%%%%%%%%%%%%%%%%%%%%%%%

\section{Block encoding of rank-one operators}\label{sec:block_encoding}

\texttt{COMPOSER} is formulated on an $n$-qubit second-quantized register, but its oracles are used only on a fixed particle-number sector $\mathcal H_{N_e}\subset(\mathbb C^2)^{\otimes n}$ (and, when indicated, on a user-chosen model subspace $\mathrm{Ran}(P^{(m)})\subseteq \mathcal H_{N_e}$). All ladder primitives and adaptor branches used in COMPOSER are strictly number conserving (Sec.~\ref{sec:Det_prep} and Appendix~\ref{app:ladders}), so $\mathcal H_{N_e}$ is an invariant subspace of every \texttt{SELECT} branch and of the assembled \texttt{PREP-SELECT-PREP}${}^\dagger$ circuits.

Accordingly, throughout this section, statements of the form $(\langle 0^t|\otimes I)\,W_{\hat O}\,(|0^t\rangle\otimes I)\approx \hat O/\alpha$ are to be understood as holding in operator norm \emph{after restriction to the working subspace}. That is, for $\Pi_{N_e}$ the orthogonal projector onto $\mathcal H_{N_e}$, we require
\[
\bigl\|\Pi_{N_e}\bigl[(\langle 0^t|\otimes I)\,W\,(|0^t\rangle\otimes I)-\hat O/\alpha\bigr]\Pi_{N_e}\bigr\|\le \epsilon.
\]
Outside the working subspace (e.g., on other Hamming-weight sectors), the action of the block-encoding unitary is unconstrained and irrelevant to \texttt{COMPOSER}'s use cases. For readability we suppress the explicit $\Pi_{N_e}$ sandwiches in most equations below.

%%%%%%%%%%%%%%%%%%%%%%%%%%%%%
\subsection{Encoding rank-one operators}\label{subsubsec:uv_block}

\begin{lemma}[Single-ancilla dyad block encoding in the one-excitation register]\label{lem:rank1_block}
{\color{black}
Throughout this lemma we treat $\hat a_u^\dagger \hat a_v$ as an operator restricted to the embedded one-electron subspace $\mathcal{H}_{N=1}\subset(\mathbb{C}^2)^{\otimes n}$, on which it is exactly the dyad $\ket{u}\bra{v}$.}
Let $\hat L_s=\lambda_s\,\hat a_u^\dagger \hat a_v$ be a bilinear rank-one operator (Eq.~\eqref{eq:L_one_body}) with $u,v\in\mathbb C^{n}$ and $\lambda_s\in\mathbb C$. 
{\color{black}In this lemma we consider the induced action on the \emph{single-excitation} subspace $\mathcal H_{1}\subset(\mathbb C^2)^{\otimes n}$, where $\hat a_u^\dagger \hat a_v$ acts exactly as the dyad $\ket{u}\bra{v}$.}
{\color{black}Without loss of generality, phases may be absorbed into $u$ and $v$ so that $\lambda_s\in\mathbb R_{\ge 0}$.}
For any error $\epsilon>0$, there exists a unitary $W_s$ acting on the system and one ancilla qubit such that
\begin{align}
\| (\bra{0}\otimes I_S)\,W_s\,(\ket{0}\otimes I_S)
- \hat L_s/\alpha \| < \epsilon , \label{eq:be_rank_1}
\end{align}
with normalization $\alpha\ge|\lambda_s|$.
\end{lemma}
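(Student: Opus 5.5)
The plan is to block-encode the rank-one dyad $\lambda_s\ket{u}\bra{v}$ on $\mathcal H_1$ using the number-conserving ladders $U_u, U_v$ of Sec.~\ref{subsec:prep_u_state} together with a single ancilla-controlled reflection. On $\mathcal H_1$, the one-electron ladder satisfies $U_u\ket{r}=\ket{u}$ exactly, and likewise $U_v\ket{r}=\ket{v}$, where $\ket r=X_r\ket{0^n}$ is the fixed pivot determinant. Hence
\[
\ket{u}\bra{v} = U_u\,\ket{r}\bra{r}\,U_v^\dagger \quad\text{on } \mathcal H_1 .
\]
Since $U_u$ and $U_v$ are number conserving, they map $\mathcal H_1$ to itself. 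This reduces the lemma to block-encoding the single projector $\Pi_r=\ket r\bra r$ on $\mathcal H_1$ with one ancilla, and then scaling by $\lambda_s/\alpha$.

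\textbf{Encoding $\Pi_r$.} On $\mathcal H_1$, $\Pi_r$ coincides with the number operator $\hat n_r=(I-Z_r)/2$, which is diagonal and local to qubit $r$. Take the ancilla in $\ket0$ and apply a controlled rotation $R_y(2\theta)$ on the ancilla, controlled by qubit $r$ being $\ket1$, with $\cos\theta = \lambda_s/\alpha$; this is well defined because $\alpha\ge|\lambda_s|$ and $\lambda_s\ge 0$ after absorbing phases. The resulting $\bra0$-block is $I-\hat n_r+(\lambda_s/\alpha)\hat n_r$, which is not yet $(\lambda_s/\alpha)\hat n_r$.

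\textbf{Removing the identity part.} This is the step I expect to be the main obstacle. I would handle it with a one-ancilla LCU of the two unitaries $I$ and $Z_r$, since $\hat n_r=(I-Z_r)/2$. Preparing the ancilla with amplitudes $(1/\sqrt2,1/\sqrt2)$, applying controlled-$(-Z_r)$, and unpreparing gives the block $\tfrac12(I-Z_r)=\hat n_r$ exactly. To realize the scale factor $\lambda_s/\alpha<1$, replace the uniform ancilla preparation by $R_y$ angles chosen so that the block becomes $c_0 I - c_1 Z_r$ with $c_0=c_1=\lambda_s/(2\alpha)$. The required PREP/PREP$^\dagger$ pair is not a pure state-preparation/unpreparation pair, so I would instead write $W_s$ as a single-qubit unitary on the ancilla controlled by qubit $r$: set
\[
W_s = \ket0\bra0_r\otimes V_0+\ket1\bra1_r\otimes V_1,
\]
where $V_0$ and $V_1$ are single-qubit unitaries on the ancilla with $\bra0V_0\ket0=0$ (for instance $V_0=X$) and $\bra0V_1\ket0=\lambda_s/\alpha$. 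A single ancilla therefore suffices. Sandwiching this controlled unitary by $U_u$ and $U_v^\dagger$ gives
\[
W_s=(I_a\otimes U_u)\,\mathrm{C}_r[V_0,V_1]\,(I_a\otimes U_v^\dagger),
\]
with $(\bra0\otimes I)W_s(\ket0\otimes I)=(\lambda_s/\alpha)\,U_u\hat n_rU_v^\dagger=\hat L_s/\alpha$ on $\mathcal H_1$.

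\textbf{Error budget.} The identity above is exact, so the only source of $\epsilon$ is the finite-precision synthesis of the Givens angles, the phases of $U_u$ and $U_v$, and the ancilla rotation angle. Each of the $\mathcal O(n)$ rotations can be synthesized to precision $\epsilon/(c\,n)$, and the errors add subadditively in operator norm. This yields the strict inequality~\eqref{eq:be_rank_1} for every $\epsilon>0$.
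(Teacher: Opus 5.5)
Your construction is correct, but it takes a different route from the paper's. The paper uses the preparation-form circuits $U^{(u)}_{\texttt{prep}}=U_uX_r$ and $U^{(v)}_{\texttt{prep}}=U_vX_r$. It block-encodes the vacuum projector $\Pi_0=\ket{0^n}\bra{0^n}$ with the standard one-ancilla gadget $(H_A\otimes I)\bigl(\ket{0}\bra{0}\otimes I+\ket{1}\bra{1}\otimes(-R_0)\bigr)(H_A\otimes I)$, where $R_0=I-2\Pi_0$ is an $n$-qubit multi-controlled phase flip. The block $U^{(u)}_{\texttt{prep}}\Pi_0U^{(v)\dagger}_{\texttt{prep}}$ is then the literal many-body dyad $\ket{u}\bra{v}$ on the whole register, so it vanishes off $\mathcal H_1$. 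The paper fixes $\alpha_s=|\lambda_s|$ and pushes any larger $\alpha$ into the outer LCU. You instead keep the number-conserving ladders and replace the pivot-determinant projector $\ket{r}\bra{r}$ by the local number operator $\hat n_r$. The two agree only on $\mathcal H_1$, but because $U_v^\dagger$ preserves $\mathcal H_1$, this is exactly enough for $(\lambda_s/\alpha)\,U_u\hat n_rU_v^\dagger$ to restrict to $\hat L_s/\alpha$ there. Your route has three advantages. First, the ancilla gadget is a single two-qubit controlled gate on qubit $r$ and the ancilla, rather than an ancilla-controlled $n$-qubit reflection. Second, any $\alpha\ge|\lambda_s|$ is dialed directly into $V_1$ instead of being deferred to the outer layer. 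Third, for $u=v$ the block $U_u\hat n_rU_u^\dagger$ equals $\hat a_u^\dagger\hat a_u$ on every particle-number sector, provided the ladder is realized as a genuine orbital rotation with fermionic routing swaps. For $u\neq v$, neither construction reproduces $\hat a_u^\dagger\hat a_v$ beyond $\mathcal H_1$, but the lemma does not ask for that. The paper's route buys a block that equals the rank-one dyad $\ket{u}\bra{v}$ exactly, independent of any sector restriction, built from a textbook reflection-to-projector LCU. Your detour through an LCU over $\{I,Z_r\}$ is unnecessary; the controlled-$[V_0,V_1]$ construction you settle on is complete by itself.
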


\noindent\textbf{Proof.} See Appendix~\ref{app:rank1_proofs}.\\

\noindent {\color{black}\textbf{Remark 1.} In the many-electron use cases of \texttt{COMPOSER}, the same ladder primitives $U_u,U_v$ appear as number-conserving orbital-rotation subroutines inside the full \texttt{SELECT} table; the dyad viewpoint is used here only to give a compact correctness proof of the rank-one adaptor interface.}\\

\noindent \textbf{Remark 2.} Based on Lemma~\ref{lem:rank1_block}, we can further block encode the the pair-excitation ladder $\hat B^\dagger[U]\hat B[V]$. We denote $\hat L_s = \lambda_s \hat B^\dagger[U]\hat B[V]$ as a rank-one pair-excitation operator with antisymmetric pair tensors $U$, $V,$ and $\lambda_s\in\mathbb C$. Assume $U$ and $V$ are normalized in the antisymmetric pair basis, $\sum_{p<q}|U_{pq}|^2=1$, $\sum_{p<q}|V_{pq}|^2=1$, and define the corresponding normalized two-electron states $|U\rangle := \hat B^\dagger[U]|0^n\rangle$, $|V\rangle := \hat B^\dagger[V]|0^n\rangle$.
Then on $\mathcal H_{N=2}\subset(\mathbb{C}^2)^{\otimes n}$, the operator $\hat B^\dagger[U]\hat B[V]$ acts exactly as the dyad $|U\rangle\langle V|$.
Without loss of generality, $\arg(\lambda_s)$ can be absorbed into $U$ or $V$ so that $\lambda_s\in\mathbb R_{\ge 0}$. For any error $\epsilon>0$, there exists a unitary $W_s$ acting on the system register and one ancilla qubit such that Eq.~\eqref{eq:be_rank_1} holds. In the many-electron use cases of \texttt{COMPOSER}, the same two-electron ladder primitives from Sec.~\ref{subsec:prep_two_electron} appear in their number-conserving form inside the \texttt{SELECT} table; the restriction to $\mathcal H_{N=2}$ is used here only to state a compact adaptor-correctness interface, analogous to Lemma~\ref{lem:rank1_block}.

\begin{lemma}[Deterministic block encoding of a squared diagonalized Cholesky channel]
\label{lem:quad_rank1}
Let $L^\mu\in\mathbb{C}^{n\times n}$ be a (single-bar) Cholesky channel with eigendecomposition $L^\mu = U^{(\mu)}\mathrm{diag}(\lambda^{(\mu)})U^{(\mu)\dagger}$. Define rotated modes $\hat a_{\mu\xi}^{\dagger}=\sum_{p}U^{(\mu)}_{p\xi}\hat a_p^{\dagger}$ and $\hat n_{\mu\xi}=\hat a_{\mu\xi}^{\dagger}\hat a_{\mu\xi}$, and define the Hermitian operator
\begin{align}
{\color{black}
\hat O_\mu := \sum_{\xi=1}^{R_\mu}\lambda^{(\mu)}_{\xi}\,\hat n_{\mu\xi}, \qquad
\Gamma_\mu := \sum_{\xi=1}^{R_\mu}\bigl|\lambda^{(\mu)}_{\xi}\bigr|.}
\end{align}
Then $\hat O_\mu^2$ admits a \emph{deterministic} $(\Gamma_\mu^2,\,\mathfrak{a}_I+2,\,0)$ block encoding, where $\mathfrak{a}_I$ is the width of the index register (binary: $\mathfrak{a}_I=\lceil\log_2 R_\mu\rceil$; unary: $\mathfrak{a}_I=R_\mu$).
{\color{black}
Concretely, we first construct a deterministic block encoding of $\hat O_\mu/\Gamma_\mu$ using
\texttt{PREP-SELECT-PREP}$^\dagger$ over the commuting projectors $\{\hat n_{\mu\xi}\}$ (including the sign of $\lambda_\xi^{(\mu)}$ as a branch phase),
and then apply a fixed, degree-$2$ QSVT/QSP polynomial implementing $x\mapsto x^2$ to obtain $\hat O_\mu^2/\Gamma_\mu^2$.}
\end{lemma}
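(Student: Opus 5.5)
The construction has two stages. First we build an exact block encoding of $\hat O_\mu/\Gamma_\mu$. Then we square it with a fixed degree-$2$ polynomial transform.

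\textbf{Stage 1: block encoding of $\hat O_\mu/\Gamma_\mu$.} Let $U^{(\mu)}$ be the fixed number-conserving orbital rotation built from the Givens ladders of Sec.~\ref{sec:Det_prep}. Conjugation by it gives $U^{(\mu)\dagger}\hat n_{\mu\xi}U^{(\mu)}=\hat n_\xi$, the occupation operator of qubit $\xi$ in Jordan--Wigner form. Each $\hat n_\xi=(I-Z_\xi)/2$ is a projector, so we write $\hat n_\xi=\tfrac12(I+V_\xi)$ with the unitary $V_\xi:=-Z_\xi$. Equivalently, one signal ancilla with a controlled-$Z$ block-encodes $\hat n_\xi$ exactly: apply $H$ on the ancilla, controlled-$V_\xi$, then $H$.

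The \texttt{PREP} step loads $\sum_\xi\sqrt{|\lambda_\xi|/\Gamma_\mu}\,|\xi\rangle$ on the index register. This is a deterministic amplitude-loading tree with angles computed classically. The \texttt{SELECT} step applies, controlled on $|\xi\rangle$, the phase $\mathrm{sgn}(\lambda_\xi)$ times the single-ancilla block encoding of $\hat n_\xi$, all conjugated by $U^{(\mu)}$.

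The standard LCU identity then gives
\[
(\langle 0|\otimes I)\,W\,(|0\rangle\otimes I)=\sum_\xi\frac{\lambda_\xi}{\Gamma_\mu}\,\hat n_{\mu\xi}=\hat O_\mu/\Gamma_\mu
\]
exactly. Here $|0\rangle$ covers the index register plus one signal qubit, so $\mathfrak a_I+1$ ancillas are used. No approximation enters, since the amplitudes and Givens angles are exact by assumption, as in the paper's exact-gate model. The same circuit works with a binary or a unary index register; only $\mathfrak a_I$ changes.

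\textbf{Stage 2: squaring.} The operator $A:=\hat O_\mu/\Gamma_\mu$ is Hermitian with $\|A\|\le1$, and $W$ is a Hermitian-input block encoding. Apply QSVT/QSP with a degree-$2$ even polynomial $p(x)$ satisfying $|p|\le1$ on $[-1,1]$. The plan takes $p(x)=2x^2-1=T_2(x)$ and sets up the circuit so that the target $x^2=(T_2(x)+1)/2$ is recovered. Two routes are available:
\begin{itemize}
\item realize $x^2$ directly, since it satisfies the parity and boundedness conditions of QSP for real polynomials;
\item use one extra QSP-phase ancilla, i.e.\ the standard real-part LCU of QSVT, which averages $U_\Phi$ and $U_{-\Phi}$.
\end{itemize}
Either route adds one ancilla, giving the total $\mathfrak a_I+2$.

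A simpler equivalent is $W'=W(Z_{\mathrm{refl}})W$, where $Z_{\mathrm{refl}}=2|0\rangle\langle0|-I$ is the reflection on the block-encoding ancillas. When $W$ is Hermitian, this gives $\langle0|W'|0\rangle=2A^2-I$. The Hermiticity of $W$ can be arranged by making \texttt{SELECT} self-inverse: each $V_\xi=-Z_\xi$ is Hermitian and $U^{(\mu)}$ conjugation preserves this. Combining $2A^2-I$ with $I$ through a one-qubit LCU with equal weights then yields $A^2$ exactly. That LCU ancilla is the second extra qubit. The result is a $(\Gamma_\mu^2,\mathfrak a_I+2,0)$ block encoding of $\hat O_\mu^2$, with error $0$ because the polynomial is exact and of fixed degree.

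\textbf{Main obstacle.} The delicate step is the qubitization identity $\langle0|W Z_{\mathrm{refl}} W|0\rangle=2A^2-I$. It requires $W$ to be Hermitian, or else the construction must use $W^\dagger$ in place of the second $W$. We will check that the \texttt{PREP}/\texttt{SELECT}/\texttt{PREP}$^\dagger$ circuit with self-inverse \texttt{SELECT} is Hermitian. If it is not, we fall back on general QSVT with $W$ and $W^\dagger$ alternating, which gives the same polynomial.

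The remaining checks are routine:
\begin{itemize}
\item the sign phases keep the \texttt{SELECT} branches unitary;
\item everything is number conserving, so the statement holds on $\mathcal H_{N_e}$ as the section's convention requires;
\item the ancilla counts add to $\mathfrak a_I+2$.
\end{itemize}
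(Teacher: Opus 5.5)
Your proposal is correct. Stage 1 is in fact the paper's construction: your gadget $H_a\,\mathrm{c}\text{-}(-Z_\xi)\,H_a$ equals $X_a\,\mathrm{CNOT}_{\xi\to a}$, which is the paper's flag encoding $W_{\mu\xi}=X_f\,\mathrm{CNOT}_{\xi\to f}$, and the \texttt{PREP}/sign-phased \texttt{SELECT} sandwich is the same.

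The routes differ in Stage 2. The paper invokes a degree-2 QSVT sequence on a signal qubit $g$, with a fixed three-phase list, that realizes $x\mapsto x^2$ exactly. You instead build $2A^2-I=\langle 0|W^\dagger Z_{\mathrm{refl}}W|0\rangle$ explicitly and average it with $I$ using one extra qubit.

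Your version is elementary and can be checked line by line. It also makes explicit a point the paper leaves implicit. An even-degree polynomial appearing as the plain $|0\rangle$-block of a QSVT sequence must have $|P(0)|=1$: at $x=0$ the signal operator is off-diagonal, so an even-length product is diagonal. Hence $x^2$ is reachable only as a real part, e.g.\ $x^2=\mathrm{Re}[x^2+i(1-x^2)]$, obtained by averaging $U_\Phi$ and $U_{-\Phi}$ with the signal qubit in $|+\rangle$. Your LCU of $T_2$ with $I$ plays exactly this role, at the same ancilla cost $\mathfrak{a}_I+2$. The paper's phrasing, in turn, buys uniformity with the generic QSP scaffold it reuses for $e^{\hat\sigma}$.

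Two small fixes. First, drop ``realize $x^2$ directly'' as a separate route; for the reason above it is the real-part construction in disguise. Second, your ``main obstacle'' is not one. Using $W^\dagger$ for the second call gives $2A^\dagger A-I=2A^2-I$ unconditionally, because $A=\hat O_\mu/\Gamma_\mu$ is Hermitian. Your argument that $W$ is itself Hermitian is also correct, but it is not needed.
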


\noindent\textbf{Proof.} See Appendix~\ref{app:rank1_proofs}.\\

\noindent\textbf{General remark (two types of rank-one adaptors).}
Lemmas~\ref{lem:rank1_block} and~\ref{lem:quad_rank1} implement two complementary rank-one block-encoding primitives. Lemma~\ref{lem:rank1_block} encodes a \emph{bilinear} rank-one operator $\hat L=\lambda\,\hat a_u^\dagger \hat a_v$, which couples two distinct single-particle modes and is realized using deterministic one-electron state-preparation ladders. {\color{black}
In the double-factorized form, each Cholesky channel contributes $\tfrac12 \hat O_\mu^2$ with $\hat O_\mu=\sum_\xi \lambda^{(\mu)}_\xi \hat n_{\mu\xi}$; Lemma~\ref{lem:quad_rank1} provides a deterministic adaptor
for $\hat O_\mu$ and, via a fixed degree-2 QSVT step, for $\hat O_\mu^2$ without enumerating the $\mathcal O(R_\mu^2)$ cross terms explicitly.} Together, these two adaptors span the bilinear and quadratic building blocks arising from the nested factorization of molecular Hamiltonians and cluster generators, and both admit deterministic, single-signal block encodings with fixed circuit topology.

\begin{theorem}[Binary-multiplexed block encoding]
\label{thm:sum_block}
Suppose $\hat H=\sum_{s=1}^{\ell_H}\Omega_{s}\hat L_{s}$ and for each $s$ there exists a unitary $W_s$ that is an $(\alpha_s,t,\epsilon_s)$ block encoding of $\hat L_s$, acting on an $n$-qubit system register with $t$ ancillas initialized in $\ket{0^t}$. Then there exists a unitary $W$ that is an $(\alpha,\mathfrak{a}_H+t,\epsilon_{\mathrm{LCU}})$ block encoding of $\hat H$ such that
\begin{align}
\left\| \left(\bra{0^{\mathfrak{a}_H+t}}\otimes I_S\right)\,W\,\left(\ket{0^{\mathfrak{a}_H+t}}\otimes I_S \right)
-
\frac{\hat H}{\alpha} \right\| < \epsilon_{\rm LCU} ,
\end{align}
where
\[
\alpha=\sum_{s=1}^{\ell_H}|\Omega_s|\,\alpha_s,
~~
\epsilon_{\mathrm{LCU}}
= \frac1{\alpha}\sum_{s=1}^{\ell_H}|\Omega_s|\,\alpha_s\,\epsilon_s .
\]
Here $\mathfrak{a}_H=\lceil\log_2 \ell_H\rceil$ is the selector width, and the $t$ ancillas used by $W_s$ are reused across all branches~\cite{Childs2012LCU,Berry2015Taylor,Gilyen2019QSVT,Low2019Qubitization}. The corresponding binary-multiplexed \texttt{PREP-SELECT-PREP}$^\dagger$ circuit structure is shown in Figure~\ref{fig:circuit_LC}.

The circuit depth obeys
\[
\mathrm{depth}(W)
=
\mathcal O\left(
\mathrm{depth}(U_{\texttt{prep}})
+
\sum_{s=1}^{\ell_H}\mathrm{depth}(\text{c-}W_s)
\right).
\]
Here $\mathrm{depth}(\text{c-}W_s)$ denotes the depth of the selector-controlled implementation of $W_s$. For example, $W_s$ may be constructed using Lemma~\ref{lem:rank1_block} (bilinear adaptor) or Lemma~\ref{lem:quad_rank1} (diagonal Cholesky-channel adaptor), leading to representative costs $\mathrm{depth}(W_s)=\mathcal O(n)$ and $\mathrm{depth}(W_s)=\mathcal O(n+R_\mu)$, respectively, where $R_\mu$ is the number of retained rotated modes in channel $\mu$.
\end{theorem}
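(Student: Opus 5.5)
The plan is to use the standard linear-combination-of-block-encodings (LCU) construction.

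\textbf{Phases and the \texttt{PREP} state.} Write each coefficient as $\Omega_s=|\Omega_s|e^{i\varphi_s}$ and absorb the phase $e^{i\varphi_s}$ into the branch. For instance, apply a controlled global phase on selector value $s$, or equivalently a single-qubit $R_z$ on the ancilla, which keeps the two-qubit fabric unchanged. Set $\beta_s:=|\Omega_s|\alpha_s/\alpha$, so that $\sum_s\beta_s=1$. Let $U_{\texttt{prep}}$ act on the $\mathfrak{a}_H$-qubit selector register and prepare
\begin{align}
U_{\texttt{prep}}\ket{0^{\mathfrak{a}_H}}=\sum_{s=1}^{\ell_H}\sqrt{\beta_s}\,\ket{s}.
\end{align}
Unused binary labels $s>\ell_H$ receive zero amplitude.

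\textbf{The \texttt{SELECT} unitary.} Define
\begin{align}
W_{\texttt{sel}}=\sum_{s}\ket{s}\bra{s}\otimes e^{i\varphi_s}W_s+\sum_{s>\ell_H}\ket{s}\bra{s}\otimes I .
\end{align}
This acts on the selector, the $t$ shared ancillas, and the system. It is unitary because it is block diagonal in $s$ with unitary blocks. The $t$ ancillas are reused across branches precisely because only one branch acts on any given selector basis state. Set $W=(U_{\texttt{prep}}^\dagger\otimes I)\,W_{\texttt{sel}}\,(U_{\texttt{prep}}\otimes I)$.

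\textbf{Projection and error bound.} Project onto $\ket{0^{\mathfrak{a}_H+t}}$. The selector contraction gives
\begin{align}
\sum_s\beta_s e^{i\varphi_s}(\bra{0^t}\otimes I)W_s(\ket{0^t}\otimes I).
\end{align}
Each block satisfies $(\bra{0^t}\otimes I)W_s(\ket{0^t}\otimes I)=\hat L_s/\alpha_s+E_s$ with $\|E_s\|\le\epsilon_s$ (restricted to the working sector, per the convention of this section). The contraction therefore equals $\sum_s\Omega_s\hat L_s/\alpha+\sum_s\beta_s e^{i\varphi_s}E_s$. By the triangle inequality the residual is at most $\sum_s\beta_s\epsilon_s=\frac1\alpha\sum_s|\Omega_s|\alpha_s\epsilon_s=\epsilon_{\mathrm{LCU}}$.

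For the strict inequality, note that if every $\epsilon_s$ bound is strict then the sum is strict. If some $\epsilon_s=0$, as for the deterministic quadratic adaptor of Lemma~\ref{lem:quad_rank1}, I would read the claim as $\le$, or add an arbitrarily small slack.

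Number conservation of every branch makes $\mathcal H_{N_e}$ invariant under $W$. Hence the restricted statement follows directly from the restricted branch bounds, with no cross-sector leakage.

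\textbf{Depth.} The circuit is $U_{\texttt{prep}}$, then a sequential cascade of the selector-controlled $\text{c-}W_s$ (controlled on $\ket{s}$ via a binary decoding of the selector), then $U_{\texttt{prep}}^\dagger$. This gives depth $\mathcal O(\mathrm{depth}(U_{\texttt{prep}})+\sum_s\mathrm{depth}(\text{c-}W_s))$.

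The cited branch costs follow from the earlier results:
\begin{itemize}
\item $\mathcal O(n)$ from the $(n-1)$-rotation ladders $U_u,U_v$ of Sec.~\ref{subsec:prep_u_state} inside Lemma~\ref{lem:rank1_block};
\item $\mathcal O(n+R_\mu)$ from the orbital rotation plus the $R_\mu$-term index multiplexer and constant-degree QSP of Lemma~\ref{lem:quad_rank1}.
\end{itemize}

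The only delicate point I expect is bookkeeping on the controls. Making each $W_s$ controlled by an $\mathfrak{a}_H$-bit selector value costs an overhead that I would absorb into $\mathrm{depth}(\text{c-}W_s)$ by definition, which the statement does. I would also check that the phases $e^{i\varphi_s}$ and the padding branches do not alter the normalization $\alpha$.
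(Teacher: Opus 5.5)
Your proposal is correct and follows the paper's proof essentially step for step. It uses the same \texttt{PREP} amplitudes $\sqrt{|\Omega_s|\alpha_s/\alpha}$, the same phase-absorbing \texttt{SELECT} $\sum_s\ket{s}\bra{s}\otimes e^{i\phi_s}W_s$, the same triangle-inequality bound yielding $\epsilon_{\mathrm{LCU}}$, and the same depth count read off from the \texttt{PREP}--cascade--\texttt{PREP}$^\dagger$ structure. Your caveat on the strict inequality is apt, since the paper's own argument also only establishes $\|\Delta\|\le\epsilon_{\mathrm{LCU}}$; your identity padding of unused selector labels is a harmless detail the paper leaves implicit.
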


\noindent\textbf{Proof.} See Appendix~\ref{app:rank1_proofs}.

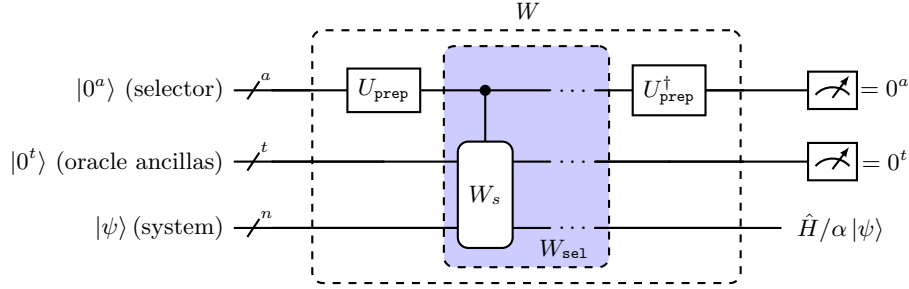
\begin{figure*}
\centering
\begin{quantikz}[row sep=0.35cm, column sep=0.5cm]
\lstick{$\ket{0^a}$ (selector)} 
& \qwbundle{a} 
& \qw 
& \gate{U_{\texttt{prep}}}\gategroup[3,steps=4,style={dashed,rounded corners,inner sep=10pt},background]{\textit{W}} 
& \ctrl{1}\gategroup[3,steps=2,style={dashed,rounded corners,fill=blue!20, inner xsep=2pt},background,label style={label position=below,anchor=north,xshift=0.5cm,yshift=0.3cm}]{$W_\texttt{sel}$} 
& ~\ldots~ 
& \gate{U_{\texttt{prep}}^\dagger} 
& \qw 
& \meter[label style={label position=below,anchor=north,xshift=0.7cm}]{=0^a} \\
\lstick{$\ket{0^t}$ (oracle ancillas)}
& \qwbundle{t}
& \qw 
& \qw                       
& \gate[wires=2,style={rounded corners}]{W_s} 
& ~\ldots~ 
& \qw 
& \qw 
& \meter[label style={label position=below,anchor=north,xshift=0.7cm}]{=0^t} \\
\lstick{$\ket{\psi}$\,(system)} 
& \qwbundle{n} 
& \qw 
& \qw 
& \qw 
& ~\ldots~
& \qw 
& \qw 
&~~\hat H/\alpha \ket{\psi}
\end{quantikz}
\caption{Binary-multiplexed \texttt{PREP-SELECT-PREP$^{\dagger}$} circuit realizing the unitary \(W\).
{\color{black}The \texttt{SELECT} stage implements the controlled table
$W_{\texttt{sel}}=\sum_{s} \ket{s}\bra{s}\otimes (e^{i\phi_s}W_s)$, where phases $\phi_s$ absorb coefficient signs/phases.}
The blue dashed fan-out applies \(W_s\) when the selector encodes \(|s\rangle\). {\color{black}The workspace register has fixed width $t=\max_s t_s$ (including any adaptor-internal index/flag/signal qubits); adaptors with smaller workspace act trivially on unused ancillas.}
The \(t\) oracle ancillas required by a single \(W_s\) are reused across branches and projected to \(|0^t\rangle\).}\label{fig:circuit_LC}
\end{figure*}

%%%%%%%%%%%%%%
\subsection{Encoding exponentials of rank-one generators}
\label{subsubsec:pqr1_exp_block}

We now describe how to implement the exponential of an anti-Hermitian generator,
\begin{align}
\hat\sigma = \sum_{s=1}^{\ell_\sigma} \omega_s \hat A_s,
~~
\hat A_s = \hat L_s - \hat L_s^\dagger, \label{eq:op_sigma}
\end{align}
where each $\hat L_s$ is a rank-one operator and $\omega_s\in\mathbb R$. Defining the associated Hermitian operators
\begin{align}
\hat{\mathbb A}_s := i\hat A_s,
\end{align}
we may rewrite
\begin{align}
e^{\hat\sigma} = e^{-i\sum_s \omega_s \hat{\mathbb A}_s}.
\label{eq:expsigma}
\end{align}
We emphasize that QSP is applied only to the block encoding of the generator $\hat\sigma$. The Hamiltonian block encoding constructed in Sec.~\ref{subsubsec:uv_block} is not exponentiated and therefore does not incur polynomial-degree overhead.

\paragraph{Block encoding of the generator.}
Using Lemma~\ref{lem:rank1_block} and Theorem~\ref{thm:sum_block}, each $\hat{\mathbb A}_s$ admits a $(2\alpha_s,1,2\epsilon_s)$ block encoding $W_{\hat{\mathbb A}_s}$, where the factor of $2$ accounts for $\hat L_s$ and $\hat L_s^\dagger$. Applying the \texttt{PREP-SELECT-PREP}$^\dagger$ construction yields a block encoding of the full generator,
\begin{align}
U_{\hat\sigma}
=
(U_{\texttt{prep}}^\dagger\otimes I)
\Bigl(
\sum_{s=1}^{\ell_\sigma} \ket{s}\bra{s}\otimes W_{\hat{\mathbb A}_s}
\Bigr)
(U_{\texttt{prep}}\otimes I),
\end{align}
with
\begin{align}
U_{\texttt{prep}}(\mathcal M^{(m)})\ket{0^{\mathfrak{a}'}}
&=
\frac{1}{\sqrt{\bar\alpha'}}
\left(
\sum_{s\in\mathcal M^{(m)}} \sqrt{2|\omega_s|\alpha_s}\,\ket{s} \right. \notag \\
&\left. +
\sqrt{\bar\alpha'-\sum_{s\in\mathcal M^{(m)}}2|\omega_s|\alpha_s}\,\ket{0}
\right),
\end{align}
{\color{black}
where $\mathfrak{a}'=\lceil\log_2 \ell_\sigma\rceil$ and the additional label $\ket{0}$ is a \emph{null branch} whose \texttt{SELECT} action is the identity. Also, here we fix a \emph{global} normalization 
\begin{align}
\bar \alpha'=\sum_{s=1}^{\ell_\sigma} 2|\omega_s|\alpha_s
\end{align}
that is independent of the mask (worst-case over the full compiled term list). Because $\omega_s\in\mathbb R$ may have either sign, we implement $\mathrm{sgn}(\omega_s)$ as a fixed phase in the corresponding \texttt{SELECT} branch (equivalently, as a $\pi$ phase on $\ket{s}$ in the \texttt{PREP} state), so that the effective linear combination reproduces $\sum_s \omega_s \hat{\mathbb A}_s$ rather than $\sum_s |\omega_s|\hat{\mathbb A}_s$.}
Projecting the ancillas yields
\begin{align}
\| (\bra{0^{\mathfrak{a}'+1}}\otimes I)\,U_{\hat\sigma}\,(\ket{0^{\mathfrak{a}'+1}}\otimes I)
-
\hat\sigma/\bar{\alpha}' \|  \le \epsilon',
\end{align}
with the block-encoding error
\begin{align}
\epsilon'=\frac{4}{\bar{\alpha}'}\sum_s |\omega_s|\alpha_s\,\epsilon_s .
\label{eq:eps'}
\end{align}

\paragraph{Exponentiation via a single QSP ladder~\cite{LowChuang2017QSP,Gilyen2019QSVT}.}
Given the block encoding $U_{\hat\sigma}$, we apply quantum signal processing to approximate the matrix function $f(x)=e^{-i\bar\alpha' x}$ on $[-1,1]$. This yields
\begin{align}
\| (\bra{0}\otimes I)\,
Q(\boldsymbol{\phi},U_{\hat\sigma})\,
(\ket{0}\otimes I)
-
e^{\hat\sigma} \| \le \epsilon''(\epsilon',\varepsilon),
\label{eq:block_encode_exp_sigma}
\end{align}
where $\epsilon''$ depends linearly on $\epsilon'$ (the block-encoding error) and $\varepsilon$ (the polynomial approximation error). The required polynomial degree scales as
\begin{align}
d=\mathcal{O}\big(\bar{\alpha}' + \log(1/\varepsilon)\big), \label{eq:poly_qsp}
\end{align}
with all dependence on the coefficients $\omega_s$ absorbed into $\bar{\alpha}'$.

\paragraph{Circuit depth and mask dependence.}
The QSP circuit uses one signal qubit and $d$ controlled applications of $U_{\hat\sigma}$. Since $\mathrm{depth}(U_{\hat\sigma})=\mathcal O\big(\mathrm{depth} (U_{\texttt{prep}})+\ell_\sigma\,\mathrm{depth}(\text{c-}U_{\hat{\mathbb A}_s})\big)$, the overall depth scales as
\begin{align}
\mathrm{depth}\big(e^{\hat\sigma}\big)
=
\mathcal O\big(d\,\mathrm{depth}(U_{\hat\sigma})\big).
\end{align}
{\color{black}
A classical mask is incorporated by reparameterizing only $U_{\texttt{prep}}(\mathcal M^{(m)})$ while keeping the global normalization $\bar\alpha'$ fixed via the null branch. Consequently the \texttt{SELECT} table, ancilla wiring, and the QSP phase list (constructed for $\bar\alpha'$) are reused verbatim for every mask.}

Details of the QSP polynomial construction and error analysis are provided in Appendix~\ref{app:qsp_details}.

%%%%%%%%%%%%%%%%%%%%%%%%%%%%%%%%%%%%
\subsection{Similarity--sandwiched effective Hamiltonian}
\label{subsubsec:sim_sandwich}

With individual rank-one operators and their exponentials now in hand, we turn to encoding similarity-transformed Hamiltonians with respect to an anti-Hermitian generator and to projecting the result onto a reduced model space. In practice, after constructing the full exponential, we often wish to restrict the similarity transformation to only a subset of excitation operators—for example, to study active spaces of increasing size or to perform incremental downfolding. We capture such choices using a \emph{mask}: a classical bit string that specifies which rank-one terms are retained. The following paragraphs show how this mask is injected and how it propagates through the block-encoding pipeline.

\paragraph{Classical mask.}
Throughout the remainder of this section we assume that, at run time, a \emph{mask}
\begin{align}
\mathcal M^{(m)} \subset \{1,\dots,\ell_\sigma\}
\end{align}
is supplied, where the superscript $(m)$ labels the $m$-th choice of mask. The mask is a purely classical object that selects which rank-one generators $\{\hat{\mathbb A}_s\}$ (already block-encoded in Sec.~\ref{subsubsec:pqr1_exp_block}) are retained. Because the mask is classical, it incurs no coherence or ancilla overhead. Diagnostics for when the fixed rank-one algebraic structure remains stable, together with MP2-guided heuristics for constructing such masks in practice, are provided in Appendix~\ref{app:fixed_rank1}.

The truncated anti-Hermitian generator is therefore
\begin{align}
\hat\sigma^{(m)}
=
\sum_{s\in\mathcal M^{(m)}} -\,i\omega_s \hat{\mathbb A}_s,
~~
\hat{\mathbb A}_s = i(\hat L_s-\hat L_s^\dagger),
\end{align}
and the corresponding block encoding $U_{\hat\sigma}^{(m)}$ of $\exp(\hat\sigma^{(m)})$ is obtained by modifying only the state-preparation amplitudes in the selector register,
\begin{align}
U_{\texttt{prep}}(\mathcal M^{(m)})\ket{0^{\mathfrak{a}'}} = \ket{\chi^{(m)}}.
\end{align}
All other components of the circuit---including the \texttt{SELECT} table, ancilla registers, and QSP phase list---remain unchanged.

\paragraph{One-shot similarity sandwich.}
Given the binary-multiplexed Hamiltonian block encoding $W$ from Theorem~\ref{thm:sum_block}, we form the similarity-sandwiched unitary
\begin{align}
W_{\mathrm{eff}}^{(m)}
=
U_{\hat\sigma}^{(m)\dagger}\, W\, U_{\hat\sigma}^{(m)},
\label{eq:sim_trans_circuit}
\end{align}
which is unitary on the joint ancilla--system space and inherits the same normalization factor $\alpha$ as $W$. %Projecting all ancillas onto $\ket{0}$ and restricting the system to the mask-dependent model space yields
Projecting all ancillas onto $|0_{\mathrm{anc}}\rangle$ yields the normalized system-space block 
$(\langle 0_{\mathrm{anc}}|\otimes I)\,W_{\mathrm{eff}}^{(m)}\,(|0_{\mathrm{anc}}\rangle\otimes I)$.
We then define a user-specified \emph{model-space projector} $P^{(m)}$ whose range is a chosen subspace 
$\mathcal H_{\mathrm{mod}}^{(m)}:=\mathrm{Ran}(P^{(m)})\subseteq\mathcal H_{N_e}$ (e.g., an active-space determinant set or a truncated excitation manifold) consistent with the mask $\mathcal M^{(m)}$. In \texttt{COMPOSER}, $P^{(m)}$ is not assumed to be implemented as a coherent projector inside the oracle; rather, it specifies the subspace on which we represent the effective Hamiltonian (e.g., via matrix elements in a subspace-diagonalization routine). Accordingly, the encoded effective Hamiltonian guarantee is a bound on the \emph{restricted block}:
\begin{widetext}
\begin{align}
\Bigl\|P^{(m)}\Bigl[(\langle 0_{\mathrm{anc}}|\otimes I)\,W_{\mathrm{eff}}^{(m)}\,(|0_{\mathrm{anc}}\rangle\otimes I)
-\hat H_{\mathrm{eff}}^{(m)}/\alpha\Bigr]P^{(m)}\Bigr\|\le \epsilon'''.
\end{align}    
\end{widetext}
where {\color{black}
$P^{(m)}$ is defined as a user-specified model-space projector (e.g., an active-space determinant set or a truncated excitation manifold) that is consistent with the chosen mask $\mathcal M^{(m)}$.} Equivalently, for all $|\psi\rangle,|\phi\rangle\in\mathcal H_{\mathrm{mod}}^{(m)}$,
the matrix element error satisfies
\begin{align}
\bigl|\langle\phi|(\langle 0_{\mathrm{anc}}|W_{\mathrm{eff}}^{(m)}|0_{\mathrm{anc}}\rangle-\hat H_{\mathrm{eff}}^{(m)}/\alpha)|\psi\rangle\bigr|
\le \epsilon'''~. 
\end{align}
Here, the effective Hamiltonian is
\begin{align}
\hat H_{\mathrm{eff}}^{(m)}
=
P^{(m)} e^{-\hat\sigma^{(m)}} \hat H e^{\hat\sigma^{(m)}} P^{(m)}.
\label{eq:block_encode_H_eff}
\end{align}
The total block-encoding error obeys
\begin{align}
\epsilon''' = 2\epsilon'' + \epsilon,
\end{align}
where $\epsilon''$ is the error from the QSP approximation [Eq.~\eqref{eq:block_encode_exp_sigma}] and $\epsilon$ is the Hamiltonian block-encoding error from Theorem~\ref{thm:sum_block}. The factor of two arises from the two appearances of $U_{\hat\sigma}^{(m)}$ in the similarity sandwich. The error bound accumulates additively (up to constant factors independent of system size) with the number of logical composition layers (LCU, QSP, and the similarity sandwich), while remaining independent of the internal gate depth of each block encoding.

{\color{black}
\paragraph{Relation to Schrieffer--Wolff (SW) effective Hamiltonians.}
If $P$ projects onto a target model space and $Q=I-P$, an SW transformation chooses an anti-Hermitian $\hat\sigma$ such that the transformed Hamiltonian is (approximately) block diagonal, $Q\,e^{-\hat\sigma}He^{\hat\sigma}\,P \approx 0$, yielding an effective model-space Hamiltonian $H_{\mathrm{eff}}=P e^{-\hat\sigma}He^{\hat\sigma}P$. In \texttt{COMPOSER}, we do not prescribe how $\hat\sigma$ is obtained (perturbative SW, variational fitting, or flow-based updates); instead we provide a compile-once oracle that evaluates the masked sandwich $P^{(m)}e^{-\hat\sigma^{(m)}}H e^{\hat\sigma^{(m)}}P^{(m)}$ while keeping the two-qubit topology fixed. The classical mask $\mathcal{M}^{(m)}$ may be interpreted as selecting which couplings are actively eliminated (or retained) in a truncated SW generator, enabling systematic model-space refinement without circuit recompilation.}\\

\noindent {\bf Remark.} If one wishes to enforce model-space restriction coherently inside the oracle (rather than treating $P^{(m)}$ as a specification of which matrix elements are queried), one may introduce an additional membership oracle for $\mathcal H_{\mathrm{mod}}^{(m)}$ and use it to (i) flag leakage out of the model space or (ii) postselect/amplify within $\mathcal H_{\mathrm{mod}}^{(m)}$. Such enforcement is application dependent and is outside the scope of the compile-once block-encoding construction presented here.

\paragraph{Mask updates require \emph{only} re-dialing angles.}
Switching from $\mathcal M^{(m)}$ to a new mask $\mathcal M^{(m')}$ leaves the selector register, the \texttt{SELECT} table $W_{\texttt{sel}}$, the QSP phase list $\{\phi_k\}$, and all ancilla wiring untouched. Only the rotation angles in the deterministic state-preparation ladder that prepares $\ket{\chi^{(m')}}$ are updated. Consequently, the circuit topology is compiled once and reused verbatim across all mask choices. The overall masked similarity-sandwich workflow is sketched in Figure~\ref{fig:mask_workflow_qtikz}.

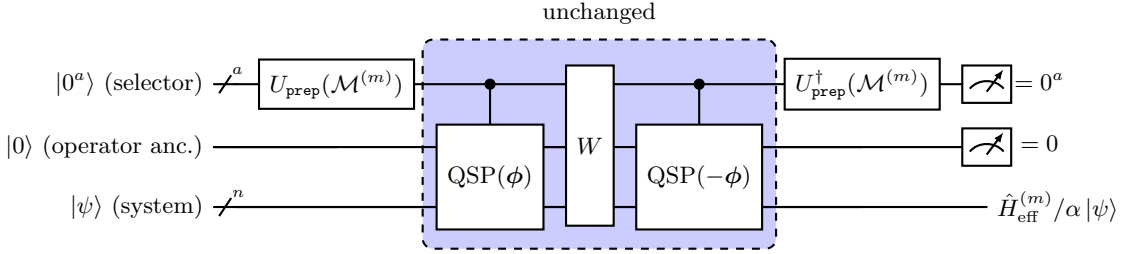
\begin{figure*}
\centering
\begin{quantikz}[row sep=0.2cm, column sep=0.3cm, font=\small]
%------------- selector / LCU layer --------------------
\lstick{$\ket{0^{a}}$ (selector)} 
& \qwbundle{a} 
& \gate{U_{\texttt{prep}}(\mathcal M^{(m)})} 
& \ctrl{2}\gategroup[3,steps=3,style={dashed,rounded corners,fill=blue!20, inner xsep=2pt},background,label style={anchor=north,yshift=0.4cm}]{unchanged}  
& \gate[wires=3]{W} 
& \ctrl{2} 
& \gate{U_{\texttt{prep}}^{\dagger}(\mathcal M^{(m)})} 
& \meter[label style={label position=below,anchor=north,xshift=0.7cm}]{=0^a} \\
%------------- QSP ladder ------------------------------
\lstick{$\ket{0}$ (operator anc.)} 
& \qw 
& \qw 
& \gate[wires=2]{\text{QSP}(\boldsymbol{\phi})} 
& \qw 
& \gate[wires=2]{\text{QSP}(-\boldsymbol{\phi})} 
& \qw 
& \meter[label style={label position=below,anchor=north,xshift=0.7cm}]{=0} \\
%------------- system wire ------------------------------
\lstick{$\ket{\psi}$ (system)} 
& \qwbundle{n} 
& \qw 
& \qw 
& \qw 
& \qw 
& \qw 
& \rstick{$\hat H_{\text{eff}}^{(m)}/\alpha \ket{\psi}$}
\end{quantikz}
\caption{Compile-once circuit for the similarity-sandwiched effective Hamiltonian. Changing the classical mask \(\mathcal M^{(m)}\) only redials the rotation angles in \(U_{\texttt{prep}}(\mathcal M^{(m)})\); every other gate, ancilla, and QSP phase is reused verbatim for every \emph{mask}.}
\label{fig:mask_workflow_qtikz}
\end{figure*}
%%%%%%%%%%%%%%%%%%%%%%%%%%%%%%%%%%%%

\subsection{Resource analysis and practical advantages}
\label{subsec:rank1_resources}

The block encoding of the effective Hamiltonian constructed above involves two independent linear combinations:
\begin{itemize}
    \item a \emph{masked generator}
    \(
      \hat\sigma^{(m)}=\sum_{s\in\mathcal M^{(m)}}-i\omega_s\hat{\mathbb A}_s
    \)
    containing
    \(\ell_\sigma\)
    rank-one terms; and
    \item the physical Hamiltonian
    \(
      \hat H=\sum_{s=1}^{\ell_H}\Omega_s\hat L_s
    \),
    factorized into \(\ell_H\) rank-one ladders.
\end{itemize}
We keep these symbols distinct in the resource estimates below.

\medskip\noindent\textbf{Gate counts.}
The selector state-preparation ladder \(U_{\texttt{prep}}(\mathcal M^{(m)})\) is implemented using controlled single-qubit rotations and CNOT gates. Its depth scales as \(\mathcal O(\ell_\sigma)\), arising solely from the number of retained labels.

To keep the depth expressions uniform across ladder types, we define the selector-controlled two-qubit depth of a single adaptor branch by its ladder class:
\begin{enumerate}
    \item $D^{(I)}(n) := \mathrm{depth}(c\text{-}W_s)$ for bilinear ladders (Lemma~\ref{lem:rank1_block});
    \item $D^{(III)}(n,R_\mu) := \mathrm{depth}(c\text{-}W_\mu)$ for Cholesky-channel ladders (Lemma~\ref{lem:quad_rank1}).
\end{enumerate}
For pair-excitation ladders, we denote the corresponding selector-controlled depth by
\[
D^{(II)}(n;\mathcal G),
\]
which depends on the hardware connectivity graph $\mathcal G$ through the routing/scheduling of the four-qubit pair--Givens.
Each rank-one adaptor branch acts on the same $n$-qubit system register and has selector-controlled depth $D^{(I)}(n)=\mathcal O(n)$ for bilinear ladders (Lemma~\ref{lem:rank1_block}), $D^{(III)}(n,R_\mu)=\mathcal O(n+R_\mu)$ for diagonalized Cholesky-channel ladders (Lemma~\ref{lem:quad_rank1}), and $D^{(II)}(n;\mathcal G)$ for pair-excitation ladders. In particular, Appendix~\ref{app:A.3} (Table~\ref{tab:B_network}) shows that $D^{(II)}(n;\mathcal G)$ scales as $\mathcal O(n^2)$ on linear/heavy-hex connectivity and as $\mathcal O(n)$ on 2D grid and all-to-all connectivity under standard scheduling assumptions.
Because the selector-controlled adaptors are mutually exclusive, the total depth of the generator \texttt{SELECT} stage scales as
\begin{align}
\mathrm{depth}\!\left(W_{\mathrm{sel}}(\sigma)\right)
&= \mathcal O\!\left(\sum_{s=1}^{\ell_\sigma} \mathrm{depth}(c\text{-}W_{\hat A_s})\right) \notag \\
&= \mathcal O\!\left(\ell_\sigma\,D_{\sigma}^{\max}\right),
\end{align}
where $D_{\sigma}^{\max}:=\max_{s\in\{1,\dots,\ell_\sigma\}} \mathrm{depth}(c\text{-}W_{\hat A_s})$. For generators dominated by pair-excitation ladders, $D_{\sigma}^{\max}\approx D^{(II)}(n;\mathcal G)$, whereas if only
bilinear ladders are used then $D_{\sigma}^{\max}=D^{(I)}(n)=\mathcal O(n)$.

For the Hamiltonian block encoding, we use the same multiplexing construction with \(\ell_{\hat H}\propto n\) rank-one terms arising from the nested factorization. This yields
\begin{align}
\mathrm{depth}\!\left(W_{\mathrm{sel}}(\hat H)\right)
= \mathcal O\!\left(\sum_{s=1}^{\ell_H} \mathrm{depth}(c\text{-}W_s)\right), \label{eq:Wsel_H}
\end{align}
independent of the generator mask. In the nested-factorized Hamiltonian of Eq.~\eqref{eq2:quadratic_rank_1}, Eq.~\eqref{eq:Wsel_H} decomposes into a bilinear part (one-electron modes) and a quadratic-channel part (Cholesky channels), giving a representative scaling 
\begin{align}
    \mathcal O\!\left(R_1D^{(I)}(n)+\sum_{\mu=1}^{K} D^{(III)}(n,R_\mu)\right).
\end{align}

Applying quantum signal processing with polynomial degree $d$,  the full similarity-sandwiched oracle therefore has depth
\begin{align}
\mathrm{depth}\!\left(W_{\mathrm{eff}}^{(m)}\right)
&=\mathcal O\!\left(d\,\mathrm{depth}(U_{\hat\sigma}^{(m)})+\mathrm{depth}(W)\right) \notag \\
&=\mathcal O\!\left(d\,\ell_\sigma\,D_{\sigma}^{\max} + \mathrm{depth}(W_{\mathrm{sel}}(\hat H))\right),
\end{align}
Here the $d\,\ell_\sigma D_{\sigma}^{\max}$ contribution arises exclusively from the QSP-controlled applications of the generator block encoding $U_{\hat\sigma}$, while the $n^2$ term originates from the Hamiltonian \texttt{SELECT} stage and is not multiplied by the QSP degree. In particular, the Hamiltonian block encoding is invoked only once inside the similarity sandwich.

\medskip\noindent\textbf{Ancilla budget.}
The generator and Hamiltonian linear combinations require selector registers of widths $\mathfrak{a}_\sigma$ and $\mathfrak{a}_H$, respectively. Because these ladders are executed sequentially and the selector register is reset between them, a single physical register of width $\mathfrak{a}=\max(\mathfrak{a}_\sigma,\mathfrak{a}_H)$ suffices. Including adaptor-internal ancillas, the total ancilla width is $\mathfrak{a} + t$, where $\mathfrak{a}=\max(\mathfrak{a}_\sigma,\mathfrak{a}_H)$ is the selector width and $t$ is the maximum number of oracle ancillas required by any selected adaptor (e.g., $t=1$ for bilinear dyads, and {\color{black}$t=\mathfrak{a}_I+2$} for the diagonalized Cholesky-channel adaptor in Lemma~\ref{lem:quad_rank1}).

The architectural implications of this scaling—namely, freezing the logical two-qubit topology and streaming only parametric updates—are discussed in detail in Sec.~\ref{sec:composer}.

\medskip\noindent\textbf{Quantitative payoff: what is avoided under instance updates.}
Let $\ell_H^{\mathrm{pool}}$ and $\ell_\sigma^{\mathrm{pool}}$ denote the sizes of the \emph{compiled} rank-one term pools for $\hat H$ and $\hat\sigma$, chosen large enough to cover a target family of instances (e.g., a geometry scan or active-space growth), and let $\ell_\sigma=|\mathcal M^{(m)}|$ denote the \emph{active} masked generator size at a particular update.
A conventional instance-specific pipeline typically regenerates the term list (or its data-access structure), rebuilds the corresponding multiplexing and routing layers, and re-runs hardware mapping/routing whenever coefficients, truncations, or active spaces change.
In \texttt{COMPOSER}, the two-qubit fabric implementing the selector-controlled adaptors and routing is compiled once for $(\ell_H^{\mathrm{pool}},\ell_\sigma^{\mathrm{pool}})$; a geometry or mask update only \emph{dials} new single-qubit angles (\texttt{PREP} amplitudes and local ladder phases) and streams new classical coefficients/masks.
A concrete proxy is the number of two-qubit layers whose placement/routing is compiled once: from Table~\ref{tab:hardware}, the logical two-qubit depth of one masked similarity-sandwiched oracle scales as $\mathcal O(d\,n\,\ell_\sigma^{\mathrm{pool}}+n^2)$ and is unchanged across updates, whereas the per-instance compilation cost of an instance-specific build scales with this entire two-qubit fabric \cite{Sivarajah2020tket,Cowtan2020QubitRouting}.
Table~\ref{tab:compile_once_payoff} summarizes which objects must be regenerated in a conventional pipeline versus \texttt{COMPOSER}.

\begin{table*}
\centering
\caption{Logical-level resource scaling for one masked similarity-sandwiched oracle. Depth counts two-qubit layers; single-qubit rotations are treated as depth-1.}
\label{tab:hardware}
\begin{tabular}{lccc}
\\[-2ex] \hline \\[-2ex]
Circuit block & System qubits & Ancillas & Depth \\
\hline \\[-2ex]
%Adaptor for one $\hat L_s$ & $n$ & $1$ (signal) & $\mathcal O(n)$ \\
Adaptor (bilinear dyad) & $n$ & $1$ (signal) & $\mathcal O(n)$ \\
Adaptor (pair-excitation ladder) & $n$ & $1$ (signal) & $D^{(II)}(n;\mathcal G)$ (see Table~\ref{tab:B_network}) \\
Adaptor (Cholesky channel) & $n$ & $\mathfrak{a}_I+2$ & $\mathcal O(n+R_\mu)$ \\
Hamiltonian \texttt{SELECT}
& $n$ & $\mathfrak{a}_H=\lceil\log_2 \ell_H\rceil$ & $\mathcal O(n^2)$ \\
Generator \texttt{SELECT}
& $n$ & $\mathfrak{a}_\sigma=\lceil\log_2 \ell_\sigma\rceil$ & $\mathcal O(\ell_\sigma D^{max}_\sigma)$ \\
\texttt{PREP} amplitude ladder
& -- & $\max(\mathfrak{a}_\sigma,\mathfrak{a}_H)$ & $\mathcal O(\ell_\sigma)$ \\
Two QSP ladders
& $n$ & shared signal qubit & $\mathcal O(d \ell_\sigma D^{max}_\sigma)$ \\[1ex]
\hline \\[-2ex]
\textbf{Total}
& $n$ & $\mathfrak{a}+t$ & $\mathcal O(d \ell_\sigma D^{max}_\sigma + {\rm depth}(W_{\rm sel}(\hat H)))$ \\
\hline
\end{tabular}
\end{table*}

\begin{table*}[t]
\centering
\caption{What must be regenerated under instance updates (geometry/mask/truncation) in a conventional instance-specific pipeline versus \texttt{COMPOSER}. ``Regenerate'' indicates structural recompilation/remapping/rerouting of two-qubit layers; ``dial'' indicates updating classical coefficients and single-qubit rotation angles while reusing the same two-qubit fabric.}
\label{tab:compile_once_payoff}
%\resizebox{0.5\textwidth}{!}{
\begin{tabular}{lcc}
\hline
Artifact under updates & Conventional build & \texttt{COMPOSER} \\
\hline
Term list / truncation pattern & regenerate & fixed pool + classical mask \\
Data-loading for coefficients & regenerate & dial for the same topology \\
\texttt{SELECT} multiplexer and two-qubit routing & regenerate & compiled once \\
\hline
\end{tabular}
%}
\end{table*}

%%%%%%%%%%%%%%%%%%%%%%%%%%%%%%%%%%%%%

\begin{figure*}
\centering
\includegraphics[width=\linewidth]{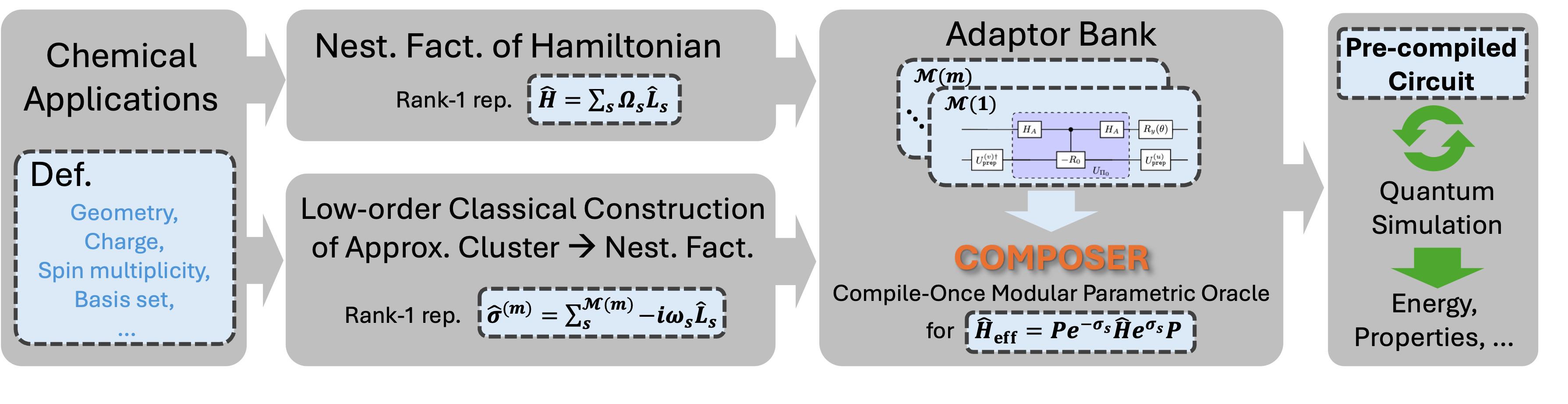}
\caption{Schematic workflow for quantum simulation using \texttt{COMPOSER}. The two-qubit circuit topology is fixed in a one-time synthesis pass; subsequent updates to molecular geometry, active space, or truncation mask re-dial only single-qubit rotations.}
\label{fig:composer_arch}
\end{figure*}

\section{Architecture, and Compile--Once Philosophy of \texttt{COMPOSER}}
\label{sec:composer}

Having established the operator factorizations, adaptor constructions, and resource scalings, we now elevate the discussion to the architectural level. The nested factorizations of $\hat H$ and $\hat\sigma$ compress both the Hamiltonian and similarity generator into linearly scaling collections of rank-one operators. 
Each rank-one ladder $\hat L_s$ is converted into a deterministic block encoding whose selector-controlled depth depends on its ladder class: $\mathcal O(n)$ for bilinear ladders, $\mathcal O(n+R_\mu)$ for diagonalized Cholesky-channel ladders, and $D^{(II)}(n;\mathcal G)$ for pair-excitation ladders, where the latter is connectivity dependent (Appendix~\ref{app:A.3}).
These ladders are assembled within a fixed \texttt{PREP-SELECT-PREP}$^\dagger$ skeleton that defines a reusable two-qubit circuit fabric (Figure~\ref{fig:mask_workflow_qtikz}).

The central design principle of \texttt{COMPOSER} is to disentangle circuit \emph{topology} from numerical \emph{data}. Topology comprises the selector cascade, adaptor layout, and QSP scaffold; data comprises the coefficients and rotation angles that vary with geometry, active space, or mask choice. By freezing the selector-controlled two-qubit structure and allowing only parametric single-qubit updates, \texttt{COMPOSER} shifts adaptation cost from structural recompilation to classical parameter streaming. Importantly, QSP overhead is confined to exponentiation of masked generators; the Hamiltonian block encoding itself remains a single multiplexed layer whose depth is independent of the QSP degree.
Figure~\ref{fig:mask_workflow_qtikz} summarizes the resulting two-phase workflow in a way that separates the one-time compilation work from the per-instance parameter updates (see also Algorithm~\ref{alg:composer_compile_dial} in Appendix~\ref{app:composer_impl}).

\paragraph{Adaptor abstraction.}
Each rank-one operator is associated with a fixed-topology adaptor that implements its normalized block encoding. Adaptors are tagged by binary addresses and embedded in an invariant selector tree determined solely by the total number of rank-one terms. Changes in coefficients or truncation masks modify only \texttt{PREP} amplitudes and local phases; the underlying two-qubit connectivity graph remains unchanged.

\paragraph{Compile--once execution model.}
\texttt{COMPOSER} generates the logical two-qubit schedule of this architecture once \cite{Sivarajah2020tket,Cowtan2020QubitRouting,Zulehner2018MappingIBM}. Subsequent instance updates—geometry sweeps, active-space growth, or mask refinement—stream new single-qubit angles to the same fixed graph. If a subset of rank-one terms is disabled by a classical mask, the corresponding adaptors are bypassed without altering circuit structure. Here ``compile-once'' refers specifically to invariance of the logical two-qubit topology; hardware calibration and fault-tolerant synthesis layers remain orthogonal \cite{Kliuchnikov2013CliffordTApprox,Selinger2015CliffordT,Amy2013MeetInMiddleSynthesis}.

\paragraph{Subspace diagonalization workloads (GCIM/QSE/NOQE) as a natural dial-many target.}
{\color{black}
The compile-once premise is particularly well aligned with subspace diagonalization workflows in which one repeatedly evaluates \emph{families} of closely related state-preparation circuits to build projected matrices $H$ and $S$.
In QSE/NOQE/GCIM-style solvers, basis states are often generated by exponentiating a small set of anti-Hermitian operators drawn from a common pool, and the working subspace is expanded by activating additional generators or adjusting their scalar weights~\cite{McClean2017QSE,Huggins2020NonOrthogonalVQE,Baek2023NOQE,Zheng2023QuantumGCM,Zheng2024UnleashedGCIM}.
In \texttt{COMPOSER} such updates correspond precisely to re-dialing the \texttt{PREP} angles and classical masks/selectors while leaving the two-qubit fabric (selector cascade, adaptor layout, and QSP scaffold) unchanged.
This enables a unified compiled circuit template for both expectation-value measurements and overlap measurements between different masked basis states, even as the subspace is adaptively enlarged.
}

\paragraph{Fault-tolerant angle synthesis (what compile-once does and does not buy).}
The compile-once promise of \texttt{COMPOSER} is about \emph{logical topology} (the two-qubit connectivity pattern and routing plan). In a fault-tolerant setting, this logical circuit must still be lowered to a discrete gate set (e.g., Clifford+$T$), and the cost of synthesizing the updated single-qubit rotations can dominate when high precision is required. Thus, compile-once should be read as \emph{eliminating topology recompilation} (remapping/rerouting/re-optimizing the two-qubit fabric) across instance updates, while leaving rotation-synthesis overhead as an orthogonal, accuracy-controlled layer \cite{Kliuchnikov2013CliffordTApprox,Selinger2015CliffordT,Amy2013MeetInMiddleSynthesis}. In practice, the rotation-synthesis tolerance can be incorporated into the same error budget that already allocates block-encoding and QSP approximation errors (Appendix~\ref{app:composer_impl}).

\paragraph{One canonical closed-loop workflow for updating $\hat\sigma$ and masks.}
A typical usage pattern is: (i) choose an orbital pool/basis and compile the circuit skeleton once (Algorithm~\ref{alg:composer_compile_dial}); (ii) for each geometry or active-space choice, compute integral-derived low-rank factors for $\hat H$ and obtain an inexpensive proxy for the similarity generator (e.g., MP2- or low-order SW-seeded amplitudes), then screen to define a mask $\mathcal M^{(m)}$; (iii) run the fixed oracle to evaluate energies/observables in the current model space; and (iv) refine by expanding the mask/active space or updating the generator parameters, while reusing the same compiled two-qubit fabric. Across this loop, the selector tree, adaptor wiring, and QSP scaffold remain fixed; only the classical coefficients/masks and the corresponding single-qubit angles are updated.

\paragraph{Error budget.}
The total tolerance $\epsilon_{\mathrm{tot}}$ decomposes into factorization, block-encoding, multiplexing, and QSP contributions. Separately, any error due to the choice of encoded operators (e.g., a CCSD-seeded truncated generator or a particular mask) is a modeling/initialization consideration and can be improved by refining the encoded data without changing the compiled two-qubit topology. This separation exposes independent control knobs for depth--accuracy tradeoffs without modifying the frozen topology (Appendix~\ref{app:composer_impl}).

\medskip
\texttt{COMPOSER} therefore reorganizes standard block-encoding primitives into a topology-invariant execution paradigm. Rather than altering asymptotic query complexity, it enforces structural reuse across families of closely related operators, enabling predictable resource scaling in adaptive quantum workflows.

%%%%%%%%%%%%%%%%%%%%%%%%%%%%%%%%%%%%%

\section{Relation to Local Jastrow Ansatz, Tensor Networks, and Canonical Transformations}
\label{sec:JastrowTensor}

The rank-one quadratic occupation operators arising from the nested factorization in \texttt{COMPOSER} bear a close relationship to both \emph{local Jastrow} correlation factors and \emph{tensor-network} operator representations. These connections provide additional physical and structural intuition for the resulting block-encoding architecture, but are not required for its construction.

\paragraph{Connection to the local Jastrow ansatz.}
A (generally \emph{non-unitary}) Jastrow correlation factor in lattice/second-quantized form is often written as \cite{Jastrow1955}
\begin{equation}
  e^{\sum_{p<q} u_{pq} \, n_p n_q},
\end{equation}
where $n_p = a_p^\dagger a_p$ is the occupation operator of orbital $p$.
{\color{black}For real $u_{pq}$, this operator is diagonal in the occupation basis and preserves particle number, but it \emph{reweights} configuration amplitudes rather than applying phases.}
{\color{black}A \emph{unitary} (phase-imprinting) variant frequently used in quantum-circuit contexts replaces the exponent by $-i\sum_{p<q}\theta_{pq}n_p n_q$, yielding a diagonal unitary that applies configuration-dependent phases \cite{Motta2023LUCJ,MatsuzawaKurashige2020JastrowDecomp,Stenger2023JastrowGutzwiller}.}

In \texttt{COMPOSER}, the projected quadratic rank-one operators [Eq.~\eqref{eq2:quadratic_rank_1}] take a closely related form:
{\color{black}they are diagonal \emph{in a channel-dependent rotated orbital basis} (e.g., the eigenbasis of each Cholesky/DF channel),}
and preserve particle number, but their coefficients are not free variational parameters—they emerge directly from the electronic-structure data and chosen factorization thresholds.
{\color{black}In the original computational basis, these operators become diagonal only after conjugation by the corresponding single-particle rotation (implemented by the same deterministic ladder/Givens primitives used elsewhere in \texttt{COMPOSER}).}
{\color{black}If one chooses to exponentiate such diagonal quadratic forms as part of a similarity generator (e.g., unitary cluster-Jastrow layers), they generate Jastrow-like correlation patterns with \emph{data-driven} coefficients \cite{Motta2023LUCJ,MatsuzawaKurashige2020JastrowDecomp}.}

\paragraph{Connection to tensor networks.}
The nested factorization of $\hat H$ and $\hat\sigma$ produces sums of outer products of low-dimensional coefficient vectors, which is closely related to the low-rank decompositions used to obtain compact matrix product operator (MPO) forms~\cite{Verstraete2008TNReview,Orus2014TNReview,Crosswhite2008MPO,White1992DMRG,Schollwock2011DMRG,ChanSharma2011DMRGQChem,Keller2015MPOChem}. From this perspective:
\begin{itemize}
    \item the selector register in \texttt{COMPOSER} plays the role of a virtual (bond) index that labels which low-rank operator branch is active;
    {\color{black}\item the overall \texttt{PREP-SELECT-PREP}$^\dagger$ construction is naturally interpreted as a \emph{star-shaped} (LCU/CP-like) operator tensor network rather than a strictly site-local MPO chain;}
    \item each rank-one adaptor corresponds to a structured operator tensor associated with one selector value; and
    \item the fixed \texttt{PREP-SELECT-PREP}$^\dagger$ skeleton defines the network connectivity, while single-qubit rotation angles supply the tensor entries.
\end{itemize}
This viewpoint highlights that \texttt{COMPOSER} can be regarded as a hardware-native, fixed-topology tensor-network \emph{execution} skeleton whose numerical content is updated efficiently via classical streaming of angles rather than circuit recompilation.

\paragraph{Connection to Schrieffer--Wolff and canonical transformations.}
{\color{black}Schrieffer--Wolff (SW)-type downfolding and related canonical-transformation formalisms can be viewed as generating effective interactions (after similarity transformation and projection) that are often dominated by density--density and other structured low-body terms in an appropriate basis \cite{SchriefferWolff1966,Bravyi2011SchriefferWolff,YanaiChan2007CanonicalTransformation}.}
{\color{black}The Jastrow-like (diagonal) and tensor-network-like (low-rank) perspectives above therefore provide an intuitive bridge: \texttt{COMPOSER} supplies a compile-once block-encoding fabric that can repeatedly \emph{update} such effective couplings across masks/model spaces without recompiling two-qubit routing, aligning naturally with iterative SW-style workflows.}

Conceptually, \texttt{COMPOSER} unifies (i) diagonal, number-conserving quadratic structure reminiscent of Jastrow correlators; (ii) structured low-rank decompositions characteristic of tensor networks; and (iii) a depth-optimized, compile-once quantum circuit topology. This synthesis enables structured effective interactions to be encoded within a tensor-network-inspired operator skeleton while retaining the flexibility and resource efficiency of the mask-aware execution model.

%%%%%%%%%%%%%%%%%%%%%%%%%%%%%%%%%%%%%
\section{Numerical Demonstration}
\label{sec:numerics}

\begin{figure}[t]
\centering
\includegraphics[width=\linewidth]{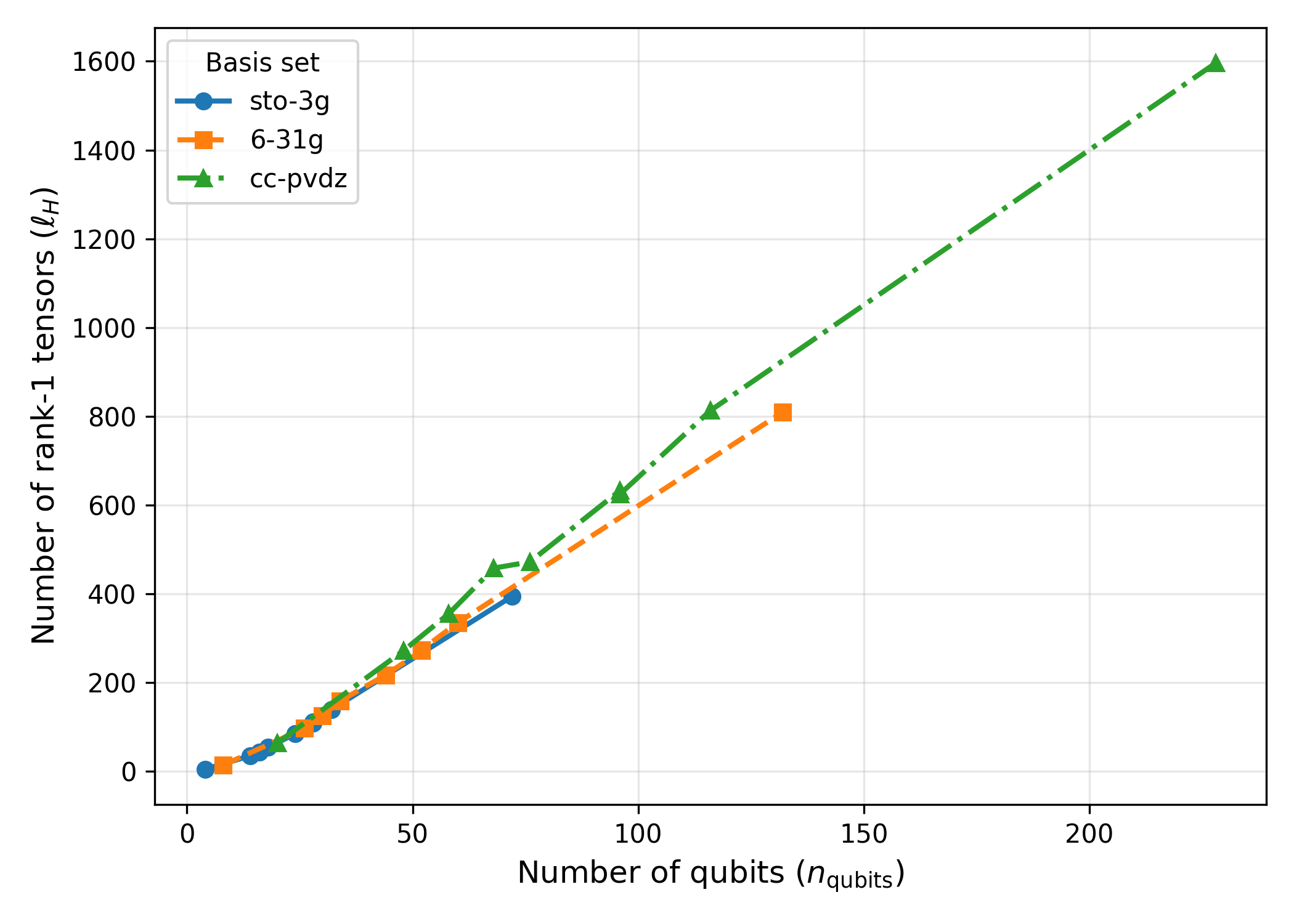}
\caption{Scaling of the total number of rank-1 tensors $\ell_H$ required for Hamiltonian block encoding as a function of the number of qubits $n_{\mathrm{qubits}}$. Results are shown for three commonly used Gaussian basis sets: STO-3G, 6-31G, and cc-pVDZ. Each data point corresponds to a molecular system included in the benchmark set, with $n_{\mathrm{qubits}}$ determined by the number of spin orbitals in the chosen basis. The approximately linear trend over the tested range indicates empirically quasi-linear growth of the rank-1 ladder pool with system size, while the vertical offsets reflect the increased two-electron integral complexity of larger basis sets. {\color{black}
All data use a fixed Cholesky threshold $\tau_{\mathrm{chol}}=10^{-8}$ (and channel eigenvalue cutoff $\tau_{\mathrm{eig}}=10^{-4}$), so that the observed quasi-linear growth reflects empirical scaling of the required Cholesky-channel count $K$ with system size for chemically relevant tolerances.}
}
\label{fig:rank1_scaling_nqubits}
\end{figure}

\begin{figure*}
\centering
\includegraphics[width=\linewidth]{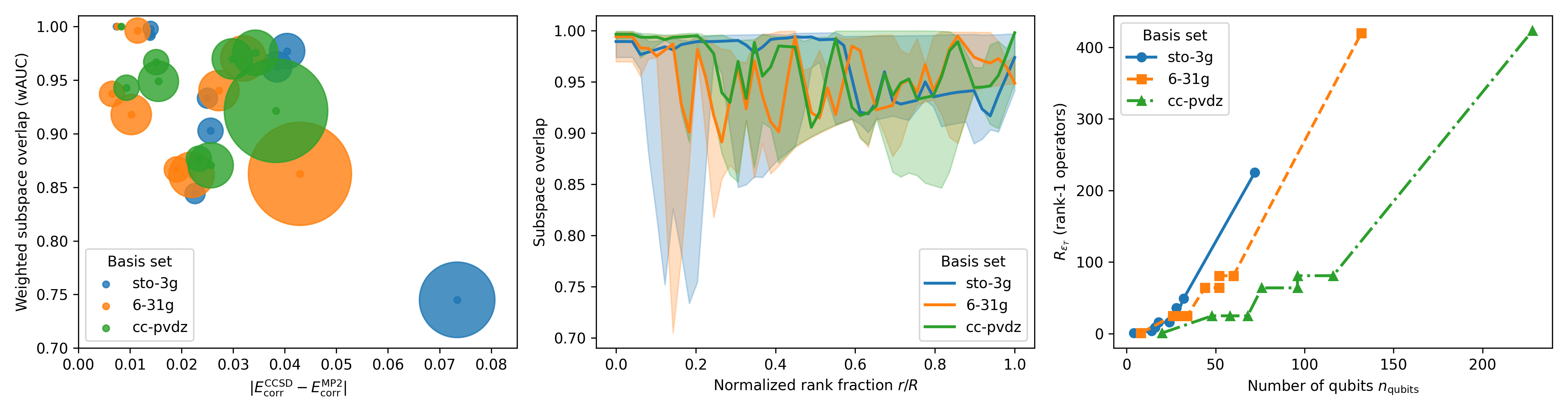}
\caption{Cross-molecular validation of MP2-derived rank-1 operator subspaces against CCSD across multiple basis sets. (a) Weighted average subspace overlap (wAUC) between MP2- and CCSD-derived rank-1 spaces for each molecule and basis set (STO-3G, 6-31G, and cc-pVDZ), plotted against the correlation-energy discrepancy $|E_{\mathrm{corr}}^{\mathrm{CCSD}} - E_{\mathrm{corr}}^{\mathrm{MP2}}|$. Marker size encodes the number of retained rank-1 operators $R_{\varepsilon}$, defined by a global relative singular-value screening criterion $s_k/s_1 \ge \varepsilon_s$ applied to the CCSD $T_2$ spectrum. (b) Median MP2--CCSD subspace overlap as a function of normalized rank fraction $r/R$, with shaded regions indicating the interquartile range across all molecules for each basis set. (c) Scaling of the number of retained rank-1 operators $R_{\varepsilon}$ with the number of qubits $n_{\mathrm{qubits}}$, illustrating the growth of the dominant $T_2$ operator manifold with system size and basis-set quality.}
\label{fig:rank1_overlap_multi_basis}
\end{figure*}

The numerical tests reported in this section use a small benchmark set of closed-shell molecules spanning increasing system size and bonding complexity: H$_2$, H$_2$O, NH$_3$, CH$_4$, H$_2$CO, C$_2$H$_4$, CH$_3$OH, C$_2$H$_6$, and C$_6$H$_6$. These systems include diatomic, triatomic, and polyatomic species with single, double, and aromatic bonding motifs, and involve C, H, O, and N atoms. All geometries were fixed at standard equilibrium structures. For each molecule, calculations were performed in the STO-3G, 6-31G, and cc-pVDZ basis sets. 

{\color{black}
Unless stated otherwise, all calculations use restricted Hartree--Fock canonical orbitals and a fixed electron number with no symmetry breaking. Two-electron integrals are factorized via pivoted Cholesky with threshold $\tau_{\mathrm{chol}}=10^{-8}$, which determines the number of Cholesky channels $K$ and hence the rank-one Hamiltonian pool size $\ell_H$ [Eq.~\eqref{eq:Ham_rank1}]. For each channel, eigenmodes with $|\lambda^{(\mu)}_\xi|/\max_\xi|\lambda^{(\mu)}_\xi|<\tau_{\mathrm{eig}}=10^{-4}$ are discarded, defining $R_\mu$. MP2 amplitudes follow M\o ller--Plesset second-order perturbation theory, and CCSD amplitudes are obtained from converged coupled-cluster singles and doubles equations with energy residual tolerance $10^{-6}$ a.u.}

We present numerical evidence that the \emph{structural conditions required for topology-invariant compilation} are satisfied in representative molecular workflows. Specifically, we validate (i) that the nested Hamiltonian factorization yields a rank-one ladder pool whose size grows mildly with system size (so the \texttt{SELECT} fabric can be fixed), and (ii) that low-rank excitation subspaces are stable under inexpensive classical proxies (so \emph{mask updates} can be realized by re-dialing \texttt{PREP} amplitudes without restructuring the circuit). Accordingly, the goal of this section is \emph{not} to benchmark chemical energies or fault-tolerant Toffoli counts, but to support the architectural premise of Figure~\ref{fig:motivation}: across families of closely related instances (basis choice, molecule size, and masking), the logical two-qubit topology can remain unchanged while instance dependence enters only through streamed single-qubit parameters.

\paragraph{Scaling of rank-one Hamiltonian factorizations.}
Figure~\ref{fig:rank1_scaling_nqubits} shows the total number of rank-one tensors $\ell_H$ required to represent the electronic Hamiltonian as a function of the number of qubits $n_{\mathrm{qubits}}$, for three commonly used Gaussian basis sets (STO-3G, 6-31G, and cc-pVDZ). {\color{black}
Here $\ell_H$ denotes the number of \emph{rank-one adaptor branches} in the Hamiltonian LCU, i.e., $\ell_H=R_1+K$ in Eq.~\eqref{eq:Ham_rank1}, where $R_1$ is the number of retained one-electron modes (typically $R_1\approx n_{\mathrm{qubits}}$ in the absence of additional truncation) and $K$ is the number of Cholesky channels at threshold $\tau_{\mathrm{chol}}$. Thus, the nontrivial scaling signal is primarily the quasi-linear growth of $K$ for fixed $\tau_{\mathrm{chol}}$.}
Each data point corresponds to a molecular system included in the benchmark set, with $n_{\mathrm{qubits}}$ determined by the number of spin orbitals in the chosen basis.

Across all basis sets, $\ell_H$ exhibits empirically quasi-linear growth over the tested molecular range, with basis-set-dependent slopes. Since the selector width is $\mathfrak{a}_H=\lceil\log_2 \ell_H\rceil$, this implies only logarithmic growth in the selector register even as the ladder pool expands. As a result, a fixed \texttt{PREP-SELECT-PREP}$^\dagger$ multiplexing topology can be chosen once and reused across increasing molecular complexity, satisfying a central structural requirement of the compile-once execution model.

{\color{black}
\paragraph{Instance-update invariance under geometry scans (compile-once stress test).}
We performed a short geometry scan for H$_2$O by scaling the equilibrium H--O bond length $R_e$ by factors $0.5,1.0,1.5,$ and $2.0$ and recomputing the nested Hamiltonian factorization at each geometry. Table~\ref{tab:h2o_scan_lH} reports the Cholesky-channel count $K$ and the resulting Hamiltonian ladder count $\ell_H=R_1+K$ across three basis sets. Over the full scan window, $\ell_H$ varies mildly (STO-3G: constant at 35; 6-31G: 94--97; cc-pVDZ: 239--272), so the selector width can be fixed once as $\mathfrak{a}_H=\lceil\log_2 \ell_H^{\max}\rceil$ without changing any two-qubit multiplexing/routing structure. Geometry dependence is absorbed entirely into updated single-qubit angles in \texttt{PREP} and local ladder phases, consistent with the compile-once premise. In particular, for each basis we can fix $\mathfrak a_H=\lceil\log_2\ell_H\rceil$ at the maximum value over the scan window and treat all geometry dependence as streamed updates to the \texttt{PREP} angles.}

\begin{table*}
\centering
\caption{H$_2$O geometry-scan stress test for topology-invariant compilation. We vary the H--O bond length from $0.5R_e$ to $2.0R_e$ (with $R_e$ the equilibrium bond length) and report the Cholesky-channel count $K$ and total Hamiltonian rank-one ladder count $\ell_H=R_1+K$ (shown as $K/\ell_H$). The compiled selector width can be fixed for each basis as $\mathfrak{a}_H=\lceil\log_2 \ell_H^{\max}\rceil$ over the scan window.}
\label{tab:h2o_scan_lH}
\begin{tabular}{lcccccc}
\hline
Basis & $0.5R_e$ & $1.0R_e$ & $1.5R_e$ & $2.0R_e$ & $\ell_H^{\max}$ & $\mathfrak{a}_H=\lceil\log_2 \ell_H^{\max}\rceil$ \\
\hline
sto-3g  & 28/35  & 28/35  & 28/35  & 28/35  & 35  & 6 \\
6-31g   & 84/97  & 84/97  & 84/97  & 81/94  & 97  & 7 \\
cc-pVDZ & 232/256 & 248/272 & 230/254 & 215/239 & 272 & 9 \\
\hline
\end{tabular}
\end{table*}

\paragraph{Stability of low-rank excitation subspaces.}
Figure~\ref{fig:rank1_overlap_multi_basis} examines the robustness of low-rank excitation manifolds derived from inexpensive perturbative calculations. {\color{black}
Here the subspace overlap between MP2 and CCSD rank-one manifolds is computed from the top-$r$ left singular subspaces of the reshaped $T_2$ tensor (pair-virtual $\times$ pair-occupied), using the normalized projector overlap $\mathrm{ov}(r)=\frac{1}{r}\|P^{\mathrm{MP2}}_r P^{\mathrm{CCSD}}_r\|_F^2$ (equivalently, the mean squared cosine of principal angles).} Panel~(a) shows the weighted average subspace overlap (wAUC) between MP2-derived and CCSD-derived rank-one excitation spaces, plotted against the correlation-energy discrepancy $|E_{\mathrm{corr}}^{\mathrm{CCSD}}-E_{\mathrm{corr}}^{\mathrm{MP2}}|$. Marker size indicates the number of retained rank-one operators $R_{\varepsilon}$ selected using a global singular-value threshold. The reported wAUC is a weighted average of $\mathrm{ov}(r)$ over $r\le R$, with weights chosen to emphasize the leading ranks (see Appendix~\ref{app:fixed_rank1} for the exact definition). {\color{black}
Across the benchmark set, the median wAUC remains $0.85$ or higher in all three basis sets, with the lowest-outlier cases occurring for C$_6$H$_6$ where MP2 correlation energy employing STO-3G basis set differs most strongly from CCSD.}

Despite variations in correlation energy and basis set, the dominant excitation subspaces exhibit consistently high overlap, indicating that MP2 captures the leading algebraic structure of the CCSD excitation manifold. Panel~(b) further shows that this agreement persists as a function of normalized rank fraction $r/R$, with median overlaps remaining close to unity across basis sets. Panel~(c) illustrates the growth of the retained rank-one operator count $R_{\varepsilon}$ with system size, demonstrating that the dominant excitation space grows systematically but remains a small fraction of the full $T_2$ manifold.

{\color{black}
In SW-type effective Hamiltonian constructions, the generator is chosen to suppress couplings between a model space and its complement, and practical implementations rely on truncating this generator to a structured subset of dominant channels. The observed stability of low-rank $T_2$-derived manifolds under MP2 proxies provides numerical support for such truncation/masking heuristics: it suggests that the leading couplings that would enter a low-order SW generator can be identified inexpensively and then updated across related instances (e.g., along a geometry scan) by streaming coefficient changes rather than recompiling the two-qubit fabric.}

Collectively, these results justify MP2-guided truncation and masking strategies in \texttt{COMPOSER}: a relatively small number of rank-one operators can capture the dominant excitation subspace across a broad range of molecules and basis sets. Architecturally, this stability underpins mask-aware workflows in which refinement proceeds by updating classical masks and re-dialing \texttt{PREP} amplitudes, while preserving the underlying multiplexed circuit fabric. See Appendix~\ref{app:fixed_rank1} for the MP2-guided truncation procedure and stability diagnostics underlying these masking choices.
We emphasize that this section supports a different objective than reducing asymptotic query complexity: it supports eliminating \emph{structural recompilation} across instance families by ensuring that the operator representation admits a stable, fixed-topology multiplexed realization. A detailed comparison of logical gate counts or fault-tolerant resource estimates under different masking strategies is deferred to future work.

\begin{figure}[t]
\centering
\includegraphics[width=\linewidth]{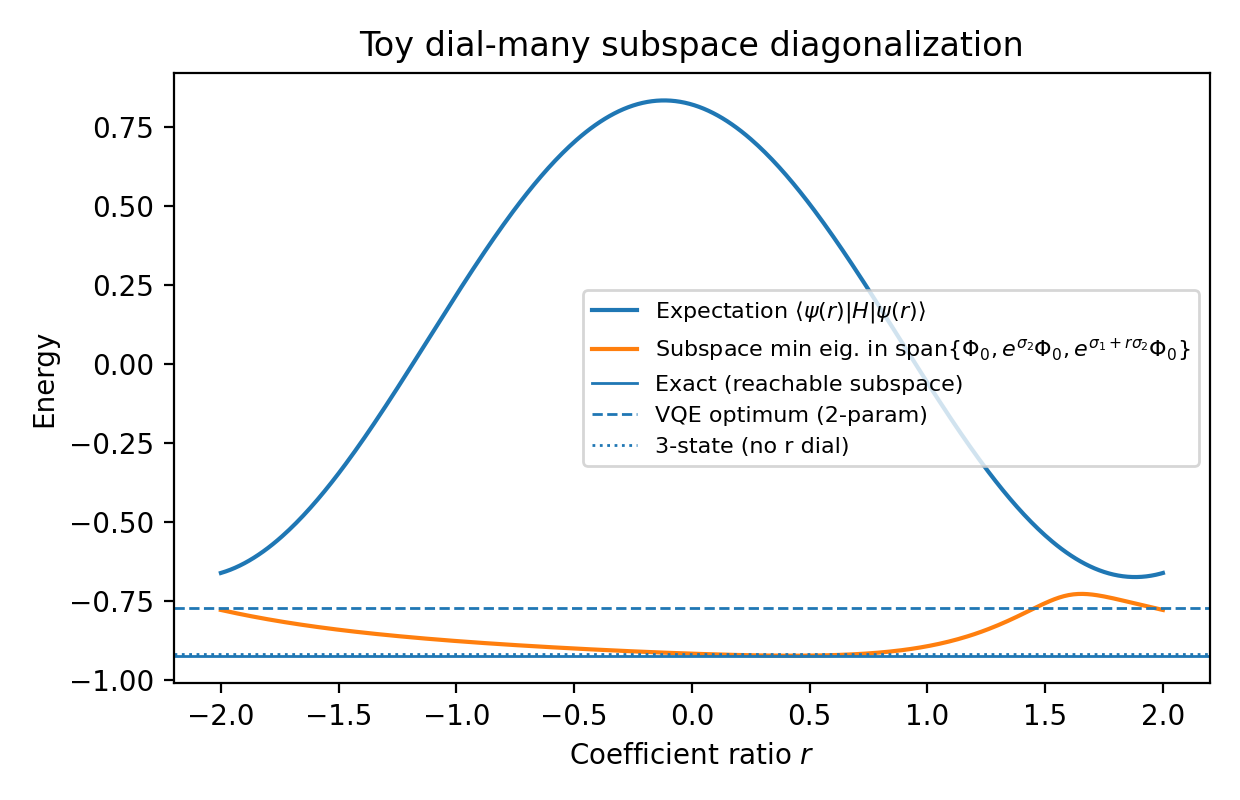}
\caption{Toy illustration of a ``compile-once, dial-many'' non-orthogonal subspace solve aligned with the \texttt{COMPOSER} execution model. A fixed state-preparation topology generates a small non-orthogonal determinant basis from a shared operator pool and supports a continuous generator-coordinate sweep by re-dialing a single classical coefficient.}
\label{fig:toy_gcim_dial_many}
\end{figure}

\paragraph{Illustrative end-to-end subspace diagonalization with a continuous generator-coordinate dial.}
Figure~\ref{fig:toy_gcim_dial_many} provides a minimal end-to-end illustration of this ``compile-once, dial-many'' pattern in a non-orthogonal subspace solve.
Here, we consider a 4-qubit, 2-electron determinant subspace generated from a reference $\ket{\Phi_0}=\ket{1100}$ by two commuting anti-Hermitian generators $(\hat\sigma_1,\hat\sigma_2)$ implemented as number-conserving Givens rotations on disjoint orbital pairs.
A conventional separable product ansatz $\ket{\psi(\theta_1,\theta_2)}=e^{\theta_2\hat\sigma_2}e^{\theta_1\hat\sigma_1}\ket{\Phi_0}$ is expressivity-limited and converges to a higher energy (here $E_{\mathrm{VQE}}=-0.774$ in arbitrary units), while a small non-orthogonal basis $\{\ket{\Phi_0},e^{\hat\sigma_1}\ket{\Phi_0},e^{\hat\sigma_2}\ket{\Phi_0}\}$ yields a much lower generalized-eigenvalue estimate ($E=-0.918$).
Finally, sweeping a continuous generator coordinate $r$ in $e^{\hat\sigma_1+r\hat\sigma_2}\ket{\Phi_0}$ within the fixed 3-state span $\{\ket{\Phi_0},e^{\hat\sigma_2}\ket{\Phi_0},e^{\hat\sigma_1+r\hat\sigma_2}\ket{\Phi_0}\}$ recovers the exact subspace ground energy (here $E_{\mathrm{exact}}=-0.924$) without changing circuit topology, illustrating the ``compile-once, dial-many'' premise in a subspace-diagonalization setting.
From the \texttt{COMPOSER} perspective, both discrete basis updates (activating/deactivating generators via masks) and continuous updates (changing coefficient ratios such as $r$) correspond to re-dialing \texttt{PREP} angles and classical selector data, while leaving the compiled two-qubit fabric unchanged.
We emphasize that the objective here is not to reduce asymptotic query complexity, but to eliminate \emph{structural recompilation} across instance families by ensuring that the operator representation admits a stable, fixed-topology multiplexed realization.

% =========================
\section{Conclusion and Outlook}

\texttt{COMPOSER} provides a practical bridge from nested low-rank tensor factorizations to hardware-native, mask-aware block encodings for molecular simulation. By fixing the multiplexing topology and encapsulating all instance dependence in single-qubit rotations, it delivers depth-optimal oracles driven by a single QSP ladder, predictable ancilla budgets comprising only selector and signal registers, and zero recompilation overhead across geometry sweeps, active-space growth, and adaptive similarity transformations. These properties are supported by numerical evidence demonstrating near-linear scaling of rank-one Hamiltonian factorizations and robustness of low-rank excitation subspaces, which validate the structural assumptions of the architecture.

{\color{black}
Beyond downfolded Hamiltonian simulation, the same mask-aware interface is immediately applicable to non-orthogonal subspace solvers that require repeated evaluation of Hamiltonian and overlap matrix elements in an expanding basis (e.g., QSE/NOQE/GCIM), since basis-state updates can be expressed as mask/parameter updates on a fixed oracle template~\cite{McClean2017QSE,Huggins2020NonOrthogonalVQE,Baek2023NOQE,Zheng2023QuantumGCM,Zheng2024UnleashedGCIM}.
This offers a concrete end-to-end application in which the compile-once principle directly mitigates the otherwise quadratic growth in distinct compiled circuits as the subspace dimension increases.
}

Methodologically, deterministic, number-conserving ladders provide a unified implementation of bilinear, pair-excitation, and projected-quadratic rank-one operators, enabling efficient block encodings for both Hamiltonians and anti-Hermitian generators. Practically, the mask-aware ``compile-once, dial-later'' execution model reduces end-to-end latency, simplifies resource estimation, and decouples adaptive algorithm design from circuit recompilation—features that are particularly attractive for NISQ demonstrations and early fault-tolerant deployments.

We emphasize that the present work is architectural rather than competitive in asymptotic resource counts. Detailed logical Toffoli estimates and fault-tolerant overhead comparisons under specific hardware models are natural next steps, but are orthogonal to the compile-once principle developed here. The numerical results validate the structural premises (low-rank operator growth and excitation-subspace stability) that justify freezing the logical circuit topology, rather than demonstrating an end-to-end runtime advantage over existing simulation frameworks.

Looking ahead, natural extensions of this work include automated hardware-aware scheduling of two-qubit gates for large instances on heavy-hex and grid-based architectures, systematic studies of mask-selection strategies and error-budget allocation, and generalizations beyond electronic structure to lattice models and open-system generators. 
{\color{black} Angle-selection and refinement (Sec.~\ref{sec:composer} and Introduction) remain important directions, including systematic improvements beyond CCSD-seeded doubles via tighter factorizations, expanded operator pools, or hybrid diagnostic-guided updates—without altering the fixed two-qubit topology.}
More broadly, the connections between projected-quadratic ladders, local Jastrow factors, and tensor-network operator representations suggest a unifying perspective in which fixed-topology, data-driven quantum circuits capture correlation efficiently across a wide range of quantum simulation settings.

\begin{acknowledgments}
B.P. acknowledges the support from the Early Career Research Program by the U.S. Department of Energy, Office of Science, under Grant No. FWP 83466.  Y.L. acknowledges the support by the U.S. Department of Energy, Office of Science, Advanced Scientific Computing Research, under contract number DE-SC0025384. K.K. acknowledges the Quantum Algorithms and Architecture for Domain Science Initiative (QuAADS), a Laboratory Directed Research and Development (LDRD) program at PNNL.

\end{acknowledgments}

%%%%%%%%%%%%%%%%%%%%%%%%%%%%%%%%%%%%%
\appendix

\section{Deterministic Ladder Constructions and Hardware Realizations}
\label{app:ladders}

This appendix collects implementation-level details for the deterministic ladder circuits used in Secs.~\ref{subsec:prep_u_state} and~\ref{subsec:prep_two_electron}. In the main text we emphasize the existence, fixed topology, and number-conserving nature of these ladders; here we record explicit angle recursions and convenient product decompositions that can be used directly in code generation. {\color{black}
Throughout this appendix we use the convention $\prod_{k=1}^{m} U_k := U_m\cdots U_2 U_1$ (rightmost factor acts first).}

\subsection{One-electron ladder: recursion and explicit product form}

Fix an ordering $(p_1,p_2,\dots,p_{n-1})$ of non-pivot orbitals (with pivot $r$). The ladder rotation angles may be chosen recursively.
{\color{black}
Define tail norms $s_k:=\sqrt{|u_r|^2+\sum_{j\ge k}|u_{p_j}|^2}$ for $k=1,\dots,n-1$, and set $s_n:=|u_r|$. 
A convenient deterministic choice of ladder angles is then
\begin{align}
\theta_{p_k r}
&=
\arctan\left(\frac{|u_{p_k}|}{s_{k+1}}\right) \notag \\
&=
\arctan\left(
\frac{|u_{p_k}|}{\sqrt{|u_r|^2+\sum_{j> k}|u_{p_j}|^2}}
\right),
\label{eq:theta_recursion}
\end{align}
equivalently $\sin\theta_{p_k r}=|u_{p_k}|/s_k$ and $\cos\theta_{p_k r}=s_{k+1}/s_k$.}
so that amplitude is transferred sequentially from $\lvert r\rangle$ to the desired orbitals with correct magnitudes. {\color{black}
Residual complex phases are restored by diagonal phase shifts on orbitals; since the overall many-body phase is irrelevant we may fix the gauge so that $u_r\in\mathbb{R}_{\ge 0}$ and apply $R_z^{(p)}(\phi_p)$ with $\phi_p=\arg(u_p)$ for $p\neq r$ (or include $p=r$ as well, which changes only a global phase).}

Collecting all operations yields the preparation-form ladder unitary.
{\color{black}
Define a real-amplitude ladder $V_u:=\prod_{k=1}^{n-1} G_{p_k r}(\theta_{p_k r})$ and a diagonal phase operator 
$D_u:=\prod_{p\neq r} R_z^{(p)}(\arg u_p)$. 
Then the preparation-form ladder may be written compactly as
\begin{align}
U^{(u)}_{\texttt{prep}}
=
D_u\,V_u\,X_r ,
\label{eq:Uu_explicit}
\end{align}
which satisfies $U^{(u)}_{\texttt{prep}}\lvert 0^n\rangle=\lvert u\rangle$.}
The inverse is obtained by reversing the gate order and negating all angles.

\subsection{Two-electron ladder: recursion and explicit product form}

Fix an ordering of the non-pivot unordered pairs $(p_1,q_1),\dots,(p_m,q_m)$ with $m=\binom{n}{2}-1$ (pivot pair $(r,s)$). The ladder angles and phases may be chosen as follows.
{\color{black}
Define tail norms $s_k:=\sqrt{|u_{rs}|^2+\sum_{j\ge k}|u_{p_j q_j}|^2}$ for $k=1,\dots,m$, and set $s_{m+1}:=|u_{rs}|$.
A convenient deterministic choice is
\begin{align}
\theta_k
&=
\arctan\left(\frac{|u_{p_k q_k}|}{s_{k+1}}\right) \notag \\
&=
\arctan\left(
\frac{|u_{p_k q_k}|}{\sqrt{|u_{rs}|^2+\sum_{j> k}|u_{p_j q_j}|^2}}
\right),
\label{eq:Theta_recursion_phased}
\end{align}
equivalently $\sin\theta_k=|u_{p_k q_k}|/s_k$ and $\cos\theta_k=s_{k+1}/s_k$.}
Here, amplitude and phase are transferred sequentially from the pivot pair to the desired determinants. {\color{black}
$\phi_k = \arg(u_{p_k q_k})$ fixes the overall phase so that $u_{rs}\in\mathbb{R}_{\ge 0}$.} The corresponding preparation-form ladder is
\begin{align}
U^{(u)}_{\texttt{prep}}
=
\Big(
\prod_{k=1}^{m}
G_{p_k q_k,rs}(\theta_k,\phi_k)
\Big)
X_r X_s ,
\label{eq:UC_explicit_phased}
\end{align}
which satisfies $U^{(u)}_{\texttt{prep}}\lvert 0^n\rangle=\lvert u\rangle$. 

%%%%%%%%%%%%%%%%%%%%%%%%%%%%%%%%%%

\subsection{Two-Electron Givens--SWAP Networks and Hardware Mapping}\label{app:A.3}

This appendix details the routing and scheduling of the two-electron deterministic ladders introduced in Sec.~\ref{subsec:prep_two_electron}, with emphasis on SWAP overheads and connectivity-dependent depth scaling.~\cite{Kivlichan2018FermionicSwap,Babbush2018LowDepth,Cowtan2020QubitRouting,Sivarajah2020tket,Zulehner2018MappingIBM}

{\color{black}
The two-electron preparation circuit may be expressed directly in terms of the phased pair--Givens blocks from Eq.~\eqref{eq:pair_Givens_phased},
\begin{align}
U^{(u)}_{\texttt{prep}}
=
\Bigl(
\prod_{k=1}^{m}
G_{p_k q_k,rs}(\theta_k,\phi_k)
\Bigr)
X_r X_s .
\label{eq:B_prep}
\end{align}
In practice, the phase parameter $\phi_k$ is realized by a constant-depth pattern of local $Z$ rotations within the fixed four-qubit block implementing $G_{p_k q_k,rs}$, which enables consolidation of many diagonal phases across neighboring blocks.}
%contains $\mathcal O(n^2)$ pair-Givens rotations and $2\mathcal O(n^2)$ single-qubit phase gates in total. 
Below we summarize routing strategies that minimize depth under common hardware connectivity assumptions.

\paragraph{Minimal SWAP schedule}
During the ladder execution, the pivot pair $(r,s)$ must remain adjacent.
A single-pivot routing scheme suffices:
\begin{align}
(r,s,q_1,q_2)
&\xrightarrow[(r,q_1)]{\texttt{fSWAP}}
(q_1,s,r,q_2) \notag\\
&\xrightarrow[(s,q_2)]{\texttt{fSWAP}}
(q_1,q_2,r,s),
\end{align}
requiring two long-range fermionic SWAPs per four-qubit block.
No additional local SWAPs are necessary.

\begin{table*}
\centering
\caption{Depth and entangling-gate counts for one four-qubit Givens block
under representative connectivity graphs.}
\label{tab:B_network}
\begin{tabular}{lccc}
\hline
Architecture &
Routing cost &
CZ gates per block &
Depth scaling \\ \hline
Linear / heavy-hex
& $(4d_{\mathrm{g}}-2)$ \texttt{fSWAP}s
& $8+6(4d_{\mathrm{g}}-2)$
& $\mathcal O(n^2)$ \\
2D square grid
& $(4\ell-4)$ \texttt{fSWAP}s
& $8+6(4\ell-4)$
& $\mathcal O(n)$ \\
All-to-all
& none
& $8$
& $\mathcal O(n)$ \\ \hline
\end{tabular}
\end{table*}

\paragraph{Topology-dependent resource scaling} The depth and entangling-gate counts for one four-qubit Givens block under representative connectivity graphs are given in Tab.~\ref{tab:B_network}. Here $d_{\mathrm{g}}$ denotes the heavy-hex Manhattan distance and $\ell$ the linear dimension of the square grid. On linear or heavy-hex lattices, the pair-Givens blocks must be executed serially, leading to quadratic depth. Two-dimensional connectivity allows parallel scheduling of non-overlapping blocks, reducing depth to $\mathcal O(n)$, while all-to-all connectivity eliminates routing overhead entirely.

{\color{black}
\paragraph{Single-qubit phase consolidation}
Many diagonal $Z$ rotations introduced when compiling the phased pair--Givens blocks can be commuted and merged into neighboring blocks (or absorbed into the internal phase parameters of adjacent decompositions), so the \emph{visible} single-qubit \emph{depth} contribution can often be reduced substantially even when the \emph{total} number of single-qubit rotations scales as $\mathcal{O}(n^2)$. The exact consolidation achieved depends on the chosen decomposition of $G_{pq,rs}(\theta,\phi)$ and the scheduling/routing pattern.}

%%%%%%%%%%%%%%%%%%%%%%%%%%%%%%%%%%%%%

\subsection{Preparation versus number-conserving ladder forms}
\label{app:prep_vs_nc}

The ladder constructions in Secs.~\ref{subsec:prep_u_state} and~\ref{subsec:prep_two_electron} admit two closely related realizations, depending on whether the goal is state preparation from the vacuum or basis rotation within an existing many-electron state. Both realizations are generated by the same underlying sequence of Givens rotations and differ only in whether particle injection is included.

\paragraph{Preparation form.}
When the system register is initialized in the vacuum $\lvert 0^n\rangle$, the ladder unitaries may be preceded by explicit particle injection using Pauli-$X$ gates on selected pivot orbitals (or pivot pairs),
\begin{align}
U^{(u)}_{\texttt{prep}} = U_u X_r
~~ \text{or} ~~
U^{(u)}_{\texttt{prep}} = U_u X_r X_s ,
\label{eq:U_u_prep}
\end{align}
for one- and two-electron ladders, respectively. In this form, the circuit prepares the desired few-electron state deterministically from the vacuum and is useful for initializing reference states or illustrating ladder behavior.

\paragraph{Number-conserving form.}
Omitting the initial particle-injection step yields the strictly number-conserving ladder unitary $U_u$. This form acts as a basis rotation within a fixed particle-number sector and is agnostic to the specific many-electron state residing on the system register. Crucially, the numerical values of all rotation angles and phases are identical to those used in the preparation form; only the initial excitation is absent.

\paragraph{Role in \texttt{COMPOSER}.}
In the \texttt{COMPOSER} architecture, the number-conserving form $U_u$ is used inside rank-one block-encoding adaptors. This choice ensures that ladder circuits can be applied to arbitrary system states without altering particle number, enabling reuse of a fixed circuit topology across geometry changes, active-space growth, and masked similarity transformations. Particle injection via pivot $X$ gates is therefore a convenience for state preparation, not a requirement for block encoding.

Throughout the remainder of this work, $U_u$ denotes the number-conserving ladder form unless state preparation from the vacuum is stated explicitly.

%%%%%%%%%%%%%%%%%%%%%%%%%%%%%%%%%%%%%

\section{Proofs and implementation details for rank-one block encodings}
\label{app:rank1_proofs}

This appendix collects proofs and low-level circuit constructions for the rank-one block-encoding primitives introduced in Sec.~\ref{subsubsec:uv_block}. The derivations collected are provided for completeness; they do not introduce new primitives beyond standard block-encoding and LCU constructions, but establish correctness and normalization conventions used in the main text. The main text focuses on the resulting adaptor interfaces and how they assemble into a compile-once \texttt{PREP-SELECT-PREP}$^\dagger$ architecture; the derivations are deferred here for readability.

\subsection{Proof of Lemma~\ref{lem:rank1_block}}

\begin{figure*}
    \centering
    \begin{quantikz}[row sep=0.35cm, column sep=0.5cm]
    \lstick{$\ket{0}$ (operator anc.)}      
    & \qw 
    & \qw 
    & \qw\gategroup[2,steps=6,style={dashed,rounded corners,inner sep=10pt},background]{$W_s$} 
    & \gate{H_A}\gategroup[2,steps=3,style={dashed,rounded corners,fill=blue!20, inner xsep=2pt},background,label style={label position=below,anchor=north, xshift=1cm, yshift=0.3cm}]{$U_{\Pi_0}$} 
    & \ctrl{1}
    & \gate{H_A}
    & \gate{R_y(\theta)} 
    & \qw
    & \meter[label style={label position=below,anchor=north,xshift=0.7cm}]{=0} \\     
    \lstick{$\ket{\psi}$\,(system)}
    & \qwbundle{n} 
    & \qw 
    & \gate{U^{(v)\dagger}_{\texttt{prep}}}
    & \qw 
    & \gate{-R_0}
    & \qw 
    & \gate{U^{(u)}_{\texttt{prep}}}
    & \qw
    &~~\hat L_s/\alpha_s \ket{\psi}
    \end{quantikz}
    \caption{Sketch of an $(\alpha_s,1,\epsilon_s)$ block encoding of the dyad $\hat L_s=\lambda_s\ket{u}\bra{v}$ on the one-excitation subspace, with {\color{black}$\cos(\theta/2)=|\lambda_s|/\alpha_s$} (and an optional phase gate to encode $\arg\lambda_s$ if $\lambda_s$ is not chosen real). 
    {\color{black}The shaded gadget $U_{\Pi_0}$ is the standard single-ancilla LCU construction of the vacuum projector $\Pi_0=\ket{0^n}\bra{0^n}$ from the vacuum reflection $R_0=I-2\Pi_0$: $U_{\Pi_0}=(H_A\otimes I)\big(\ket{0}\bra{0}\otimes I+\ket{1}\bra{1}\otimes(-R_0)\big)(H_A\otimes I)$.}
    Conjugation by ${\color{black}U^{(v)}_{\texttt{prep}}}$ and ${\color{black}U^{(u)}_{\texttt{prep}}}$ yields $(\bra{0}\otimes I)\,W_s\,(\ket{0}\otimes I)=\ket{u}\bra{v}$.}
    \label{fig:circuit_rank1}
\end{figure*}
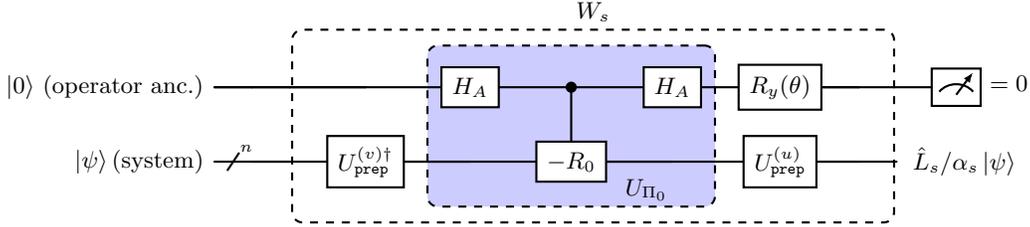

\begin{proof}
Since $\hat a_u^\dagger$ and $\hat a_v^\dagger$ are linear combinations of canonical creation operators, applying them to the vacuum produces normalized \emph{one-electron} states (i.e., superpositions of computational-basis determinants with a single occupied orbital):
\begin{align}
\ket{u}:=\hat a_u^\dagger\ket{0^n},
~~
\ket{v}:=\hat a_v^\dagger\ket{0^n}.
\end{align}
Hence the bilinear rank-one operator acts on the \emph{one-electron} sector as
\begin{align}
\hat L_s=\lambda_s\,\hat a_u^\dagger \hat a_v=\lambda_s\,\ket{u}\bra{v}.
\end{align}
Let $U^{(u)}_{\texttt{prep}}$ and $U^{(v)}_{\texttt{prep}}$ be the deterministic one-electron preparation circuits from Sec.~\ref{subsec:prep_u_state} satisfying
\begin{align}
U^{(u)}_{\texttt{prep}}\ket{0^n}=\ket{u},
~~
U^{(v)}_{\texttt{prep}}\ket{0^n}=\ket{v}.
\end{align}
Denote by $\Pi_0:=\ket{0^n}\bra{0^n}$ the projector onto the system vacuum.

Introduce a single ancilla qubit $A$ initialized in $\ket{0}_A$ and define the vacuum reflection
\begin{align}
{\color{black}
R_0 := I_S - 2\Pi_0, \qquad \Pi_0:=\ket{0^n}\bra{0^n}.}
\end{align}
{\color{black}
The unitary $R_0$ can be implemented as a multi-controlled phase flip on $\ket{0^n}$ (equivalently, conjugating a phase flip on $\ket{1^n}$ by $X$ gates on all system qubits), as indicated schematically in Figure~\ref{fig:circuit_rank1}.}

{\color{black}
We first construct a single-ancilla \emph{deterministic} block encoding of $\Pi_0$ using a two-branch LCU over $\{I_S,-R_0\}$. Define
\begin{widetext}
\begin{align}
U_{\Pi_0}
:=
(H_A\otimes I_S)\,
\Big(\ket{0}\bra{0}_A\otimes I_S + \ket{1}\bra{1}_A\otimes (-R_0)\Big)\,
(H_A\otimes I_S).
\end{align}    
\end{widetext}
A direct calculation gives
\begin{align}
(\bra{0}_A\otimes I_S)\,U_{\Pi_0}\,(\ket{0}_A\otimes I_S)
=
\frac{1}{2}\bigl(I_S- R_0\bigr)
=
\Pi_0,
\end{align}
so $U_{\Pi_0}$ is a $(1,1,0)$ block encoding of $\Pi_0$.}

{\color{black}
Now define
\begin{align}
\widetilde W_s
:=
(I_A\otimes U^{(u)}_{\texttt{prep}})\,
U_{\Pi_0}\,
(I_A\otimes U^{(v)\dagger}_{\texttt{prep}}).
\end{align}
Projecting the ancilla yields
\begin{align}
(\bra{0}_A\otimes I_S)\,\widetilde W_s\,(\ket{0}_A\otimes I_S)
&= U^{(u)}_{\texttt{prep}}\,\Pi_0\,U^{(v)\dagger}_{\texttt{prep}} \notag \\
&= \ket{u}\bra{v}.
\end{align}
Thus $\widetilde W_s$ is a $(1,1,0)$ block encoding of $\ket{u}\bra{v}$.}

{\color{black}
Finally, since $\hat L_s=\lambda_s\ket{u}\bra{v}$ has $\|\hat L_s\|=|\lambda_s|$, we may choose $\alpha_s:=|\lambda_s|$ and (without loss of generality) absorb $\arg(\lambda_s)$ into $u$ or $v$ so that $\lambda_s\in\mathbb R_{\ge 0}$. With this choice,
\begin{align}
(\bra{0}_A\otimes I_S)\,\widetilde W_s\,(\ket{0}_A\otimes I_S)
=
\frac{\hat L_s}{\alpha_s}.
\end{align}
If a larger normalization $\alpha\ge \alpha_s$ is desired for later multiplexing, it can be handled at the outer LCU layer (i.e., by rescaling coefficients in the selector preparation), without changing the single-ancilla adaptor topology.}

The above equalities hold for ideal unitaries. In practice, $U^{(u)}_{\texttt{prep}}$, $U^{(v)}_{\texttt{prep}}$, and the multi-controlled implementation of $R_0$ are synthesized to finite precision, yielding an additive block-encoding error $\Delta_s$ with $\|\Delta_s\|\le \epsilon_s$.
\end{proof}

\subsection{Proof of Lemma~\ref{lem:quad_rank1}}

\begin{proof}
We work in the rotated basis defined by $U^{(\mu)}$ (applied before the gadget and inverted after). Introduce three ancilla registers: (i) an index register $I$ encoding $\xi\in\{1,\dots,R_\mu\}$, (ii) a flag qubit $f$, and (iii) a QSVT signal qubit $g$. {\color{black} All ancillas are initialized in $\ket{0}$.}

\paragraph{\texttt{PREP} (index loading).}
{\color{black}
Prepare $I$ with amplitudes proportional to $\sqrt{|\lambda^{(\mu)}_\xi|}$:}
\begin{align}
U^{(\mu)}_{\texttt{prep}}\ket{0}_{I}
=
\frac{1}{\sqrt{\Lambda_\mu}}\sum_{\xi=1}^{R_\mu}
\sqrt{\bigl|\lambda^{(\mu)}_{\xi}\bigr|} \,\ket{\xi}_{I}.
\label{eq:prep-index-O}
\end{align}
This can be synthesized by a unary Givens ladder (depth $\mathcal O(R_\mu)$) or a standard binary state-preparation routine (depth $\mathrm{polylog}(R_\mu)$ given classical data access).

\paragraph{\texttt{SELECT} (deterministic block encoding of $\hat n_{\mu\xi}$).}
{\color{black}
For each $\xi$, define a two-qubit unitary acting on the rotated system mode $\xi$ and the flag qubit $f$:
\begin{align}
W_{\mu\xi} := X_f\,\mathrm{CNOT}_{\xi\rightarrow f},
\end{align}
where the control is the system qubit representing occupation of mode $\xi$ \emph{in the rotated basis}. A direct basis-state check shows that
\begin{align}
(\bra{0}_f\otimes I)\,W_{\mu\xi}\,(\ket{0}_f\otimes I) = \hat n_{\mu\xi}.
\label{eq:block-nxi}
\end{align}
Thus $W_{\mu\xi}$ is a deterministic $(1,1,0)$ block encoding of $\hat n_{\mu\xi}$.}

{\color{black}
Now define the multiplexed \texttt{SELECT} operator}
\begin{align}
U^{(\mu)}_{\texttt{sel}}
=
\sum_{\xi=1}^{R_\mu}\ket{\xi}\bra{\xi}_{I}\otimes
\Bigl(e^{i\phi_\xi}\,W_{\mu\xi}\Bigr),
\label{eq:select-nxi}
\end{align}
{\color{black}
where $\phi_\xi=\arg(\lambda^{(\mu)}_\xi)$ (for real eigenvalues, $\phi_\xi\in\{0,\pi\}$), so that
$e^{i\phi_\xi}\sqrt{|\lambda_\xi|}\sqrt{|\lambda_\xi|}=\lambda_\xi$.}

\paragraph{LCU block (encoding $\hat O_\mu$).}
Define the \texttt{PREP-SELECT-PREP}$^\dagger$ sandwich
\begin{align}
U_{\hat O_\mu}
:=
U^{(\mu)\dagger}_{\texttt{prep}}\,
U^{(\mu)}_{\texttt{sel}}\,
U^{(\mu)}_{\texttt{prep}}.
\end{align}
Projecting the ancillas $(I,f)$ onto $\ket{0_I0_f}$ gives
\begin{align}
&(\bra{0}_I\bra{0}_f\otimes I)\,U_{\hat O_\mu}\,(\ket{0}_I\ket{0}_f\otimes I) \notag \\
&\qquad=
\frac{1}{\Gamma_\mu}\sum_{\xi=1}^{R_\mu}
\bigl|\lambda^{(\mu)}_{\xi}\bigr|\,
e^{i\phi_\xi}\,
\hat n_{\mu\xi}
\notag\\
&\qquad=
\frac{1}{\Gamma_\mu}\sum_{\xi=1}^{R_\mu}
\lambda^{(\mu)}_{\xi}\,
\hat n_{\mu\xi} \notag \\
&=
\hat O_\mu/\Gamma_\mu,
\end{align}
which is a deterministic $(\Gamma_\mu,\,\mathfrak{a}_I+1,\,0)$ block encoding of $\hat O_\mu$.

{\color{black}
\paragraph{From $\hat O_\mu$ to $\hat O_\mu^2$ (fixed degree-2 QSVT).}
Since $\hat O_\mu$ is Hermitian and $\|\hat O_\mu\|\le \Gamma_\mu$, the above gives a valid block encoding of $\hat O_\mu/\Gamma_\mu$. Apply QSVT/QSP with the \emph{exact} polynomial $P_2(x)=x^2$ (degree $2$) to this block encoding. That is, there exists a fixed phase list $\boldsymbol{\varphi}^{(2)}=(\varphi_0,\varphi_1,\varphi_2)$ such that the QSVT circuit
\begin{align}
U_{\hat O_\mu^2}
:=
\mathrm{QSVT}\!\big(P_2,\,U_{\hat O_\mu};\,g\big)
\end{align}
satisfies
\begin{align}
&(\bra{0}_g\bra{0}_I\bra{0}_f\otimes I)\,U_{\hat O_\mu^2}\,(\ket{0}_g\ket{0}_I\ket{0}_f\otimes I)\notag \\
&\qquad=
(\hat O_\mu/\Gamma_\mu)^2
=
\hat O_\mu^2/\Gamma_\mu^2.
\end{align}
Because $P_2$ is implemented exactly, this introduces \emph{no polynomial-approximation error}; the only error source is gate synthesis if angles are approximated.}

\paragraph{Resources.}
The \texttt{PREP} stage costs $\mathcal O(R_\mu)$ two-qubit gates in a unary ladder and
$\mathrm{polylog}(R_\mu)$ depth in standard binary state preparation.
The \texttt{SELECT} stage applies a single two-qubit gate $\mathrm{CNOT}_{\xi\rightarrow f}$ (plus $X_f$)
conditioned on $\xi$; in unary this is $\mathcal O(R_\mu)$ controlled interactions, and in binary it incurs only
polylogarithmic decoding overhead on top of $\mathcal O(R_\mu)$ controls.
The basis rotations $U^{(\mu)}$ and $U^{(\mu)\dagger}$ (if included explicitly) cost $\mathcal O(n)$ two-qubit
Givens rotations.
{\color{black}The degree-2 QSVT step adds a constant overhead: two uses of $U_{\hat O_\mu}$ (and/or $U_{\hat O_\mu}^\dagger$) and $\mathcal O(1)$ single-qubit rotations on the signal qubit $g$.}
{\color{black}
Ancilla width is $\mathfrak{a}_I$ (index) $+1$ (flag) $+1$ (QSVT signal), with all ancillas initialized and postselected in $\ket{0}$.}
\end{proof}

\subsection{Proof of Theorem~\ref{thm:sum_block}}

\begin{proof}
For each $s$, by hypothesis,
\[
(\bra{0^t}\otimes I_S)\,W_s\,(\ket{0^t}\otimes I_S)
=
\hat L_s/\alpha_s+\Delta_s,
~~
\|\Delta_s\|\le\epsilon_s.
\]
Write $\Omega_s=|\Omega_s|e^{i\phi_s}$ and define the \texttt{SELECT} operator
\begin{align}
W_{\texttt{sel}}
=
\sum_{s=1}^{\ell_H}\ket{s}\bra{s}\otimes \big(e^{i\phi_s}W_s\big),
\label{eq:W_sel}
\end{align}
which applies $e^{i\phi_s}W_s$ conditioned on the selector state $\ket{s}$.
Prepare the selector register using
\begin{align}
U_{\texttt{prep}}\ket{0^{a}}
=
\frac{1}{\sqrt{\alpha}}
\sum_{s=1}^{\ell_H}\sqrt{|\Omega_s|\,\alpha_s}\,\ket{s},
~~
\alpha=\sum_{s=1}^{\ell_H}|\Omega_s|\,\alpha_s.
\end{align}
Define
\[
W=(U_{\texttt{prep}}^\dagger\otimes I)\,W_{\texttt{sel}}\,(U_{\texttt{prep}}\otimes I).
\]
Projecting the $\mathfrak{a}+t$ ancillas onto $\ket{0^{\mathfrak{a}+t}}$ yields
\begin{align}
&(\bra{0^{\mathfrak{a}+t}}\otimes I)\,W\,(\ket{0^{\mathfrak{a}+t}}\otimes I) \notag \\
&~~=
\frac{1}{\alpha}
\sum_{s}|\Omega_s|\,\alpha_s\,
(\bra{0^t}\otimes I)\,e^{i\phi_s}W_s\,(\ket{0^t}\otimes I)
\notag\\
&~~=
\frac{1}{\alpha}
\sum_{s}|\Omega_s|e^{i\phi_s}
\bigl(\hat L_s+\alpha_s\Delta_s\bigr)
\notag\\
&~~=
\frac{\hat H}{\alpha}+\Delta,
\end{align}
where
\[
\Delta
=
\frac{1}{\alpha}
\sum_{s}|\Omega_s|\,\alpha_s\,e^{i\phi_s}\Delta_s,
~~
\|\Delta\|\le\epsilon_{\mathrm{LCU}}.
\]
The depth bound follows from the cost of the selector-controlled application
of $W_s$ together with the two invocations of the state-preparation unitary
$U_{\texttt{prep}}$.
\end{proof}

%%%%%%%%%%%%%%%%%%%%%%%%%%%%%%%%%%%%%

\section{Quantum Signal Processing for Exponentiating Block-Encoded Generators}
\label{app:qsp_details}

This appendix summarizes the quantum signal processing (QSP) construction used in Sec.~\ref{subsubsec:pqr1_exp_block} to implement the exponential of a block-encoded anti-Hermitian generator. The material here is standard and included for completeness; see Refs.~\cite{LowChuang2017QSP,Gilyen2019QSVT} for full treatments.

\subsection{Polynomial approximation}

Let $U$ be a block encoding of a Hermitian operator $\hat{\mathbb{A}}_s$ with normalization $\alpha$, i.e.
\begin{align}
(\bra{0}\otimes I)\,U\,(\ket{0}\otimes I) = \hat{\mathbb{A}}_s/\alpha ,
~~
\|\hat{\mathbb{A}}_s\|\le \alpha .
\end{align}
To approximate the unitary $e^{-i\hat{\mathbb{A}}_s}$, QSP constructs a polynomial $P_d(x)$ such that
\begin{align}
\sup_{x\in[-1,1]} \bigl| P_d(x) - e^{-i\alpha x} \bigr| \le \varepsilon .
\end{align}

A convenient choice is a truncated Chebyshev expansion
\begin{align}
P_d(x) = \sum_{k=0}^{d} c_k T_k(x),
\end{align}
where $T_k(x)$ denotes the Chebyshev polynomial of the first kind. For $\alpha>0$, one may choose an even degree
\begin{align}
d = \mathcal{O}\big(\alpha + \log(1/\varepsilon)\big), \label{eq:app_C_d}
\end{align}
which guarantees uniform approximation error $\varepsilon$ on $[-1,1]$. This scaling is asymptotically optimal.

\subsection{QSP implementation}

Given the polynomial $P_d(x)$, quantum signal processing realizes $P_d(\hat{\mathbb{A}}_s/\alpha)$ using a single ancilla qubit and a sequence of phase rotations.
Specifically, there exists a phase list
\(
\boldsymbol{\phi} = (\phi_0,\phi_1,\ldots,\phi_d)
\)
such that the unitary
\begin{align}
Q(\boldsymbol{\phi},U)
=
e^{i\phi_0 Z}
\prod_{k=1}^{d}
\Bigl(
U\,e^{i\phi_k Z}
\Bigr)
\end{align}
satisfies
\begin{align}
(\bra{0}\otimes I)\,Q(\boldsymbol{\phi},U)\,(\ket{0}\otimes I)
=
P_d(\hat{\mathbb{A}}_s/\alpha).
\end{align}
{\color{black}
The phases $\{\phi_k\}$ depend only on the target function $e^{-i\alpha x}$ and the approximation tolerance $\varepsilon$; they are independent of the internal structure of $\hat{\mathbb{A}}_s$. Consequently, for a \emph{fixed} normalization $\alpha$ the phase list may be computed once classically and reused across different problem instances. If $\alpha$ changes (e.g., under masking), one either recomputes the phase list or fixes a global worst-case normalization and pads the LCU with a null branch so that $\alpha$ remains invariant.}

The QSP circuit uses:
\begin{itemize}
\item one signal ancilla qubit,
\item $d$ controlled applications of the block encoding $U$,
\item $d+1$ single-qubit $Z$ rotations.
\end{itemize}

\subsection{Error propagation}

In practice, the block encoding $U$ is approximate:
\begin{align}
(\bra{0}\otimes I)\,U\,(\ket{0}\otimes I) = \hat{\mathbb{A}}_s/\alpha + \Delta,
~~
\|\Delta\|\le \epsilon' .
\end{align}
Applying QSP to such an imperfect block encoding yields
\begin{align}
(\bra{0}\otimes I)\,Q(\boldsymbol{\phi},U)\,(\ket{0}\otimes I)
=
e^{-i\hat{\mathbb{A}}_s} + \epsilon''(\epsilon',\varepsilon),
\end{align}
where the total operator-norm error satisfies
\begin{align}
\epsilon'' = \mathcal O(d\epsilon' + \varepsilon).
\end{align}
{\color{black}
where the factor of $d$ reflects the $d$ uses of the imperfect block encoding within the QSP sequence (see, e.g., the stability bounds in Refs.~\cite{LowChuang2017QSP,Gilyen2019QSVT}).}
Thus, the overall accuracy is controlled by balancing the block-encoding error $\epsilon'$ and the polynomial approximation error $\varepsilon$. In Sec.~\ref{subsubsec:pqr1_exp_block}, $\epsilon'$ arises from the LCU construction of the generator block encoding, while $\varepsilon$ is set by the chosen polynomial degree $d$.

\subsection{Application to the generator $\hat{\sigma}$}

In the main text, QSP is applied to the block encoding $U_{\hat\sigma}$ of the anti-Hermitian generator \(
\hat\sigma=\sum_s \omega_s \hat A_s
\)(equivalently, to the Hermitian operator $\hat{\mathbb A}:=i\hat\sigma$ via $e^{\hat\sigma}=e^{-i\hat{\mathbb A}}$).
{\color{black}
Crucially, a \emph{single} QSP ladder suffices to implement $e^{\hat\sigma}$, and its degree depends on the chosen normalization (e.g., $\alpha'$ or a fixed global $\bar\alpha'$) rather than directly on the number of retained rank-one terms after masking.}

%%%%%%%%%%%%%%%%%%%%%%%%%%%%%%%%

\section{Compile--Once Architecture and Error--Depth Tradeoffs}
\label{app:composer_impl}

This appendix provides additional detail underlying the compile--once execution
model introduced in Sec.~\ref{sec:composer}. The material here concerns circuit
control overheads, selector--controlled depth scaling, and the relationship
between error budgets and overall runtime. These details are not required to
understand the high-level \texttt{COMPOSER} architecture but may be useful for concrete
implementations and resource estimation.

\subsection{Selector-controlled adaptor overhead}

In the binary-multiplexed constructions of Sec.~\ref{sec:block_encoding}, each rank-one adaptor $W_s$ is executed under control of a selector register. As a result, the relevant depth contribution is that of the \emph{controlled} implementation of $W_s$, rather than the bare adaptor itself.

For the bilinear rank-one adaptors of Lemma~\ref{lem:rank1_block}, the underlying circuit consists of $\mathcal O(n)$ two-qubit Givens rotations acting on the system register and a single signal ancilla. Standard constructions allow each two-qubit gate to be promoted to a selector-controlled version with constant factor overhead, yielding selector-controlled depth $\mathcal O(n)$.

For the diagonalized Cholesky-channel adaptors of Lemma~\ref{lem:quad_rank1}, the controlled implementation additionally involves index-dependent phase kickback operations. In this case, the selector-controlled depth scales as $\mathcal O(n+R_\mu)$, where $R_\mu$ is the number of retained rotated modes in channel $\mu$. These overheads are absorbed into the depth estimates used in Sec.~\ref{subsec:rank1_resources}.

\subsection{Depth scaling versus error tolerance}

The total circuit depth of a similarity-sandwiched oracle $W_{\mathrm{eff}}^{(m)}$ depends parametrically on the total error budget $\epsilon_{\mathrm{tot}}$ through the quantum signal processing (QSP) degree $d$ as in Eq.~\eqref{eq:app_C_d}. 
Combining this with the selector-controlled depth of the generator block encoding yields the parametric scaling
\begin{widetext}
\begin{align}
\mathrm{depth}\!\left(W_{\mathrm{eff}}^{(m)}\right)
= O\!\left(\ell_\sigma D_{\sigma}^{\max}\,[\bar\alpha' + \log(1/\epsilon_{qsp})] + \mathrm{depth}(W_{\mathrm{sel}}(\hat H))\right),
\end{align}
\end{widetext}
up to architecture-dependent constant factors. An explicit mapping from $\epsilon_{\mathrm{tot}}$ to circuit depth requires specifying a gate-synthesis model that relates $\epsilon_{\mathrm{block}}$ and $\epsilon_{\mathrm{mux}}$ to two-qubit gate counts; for this reason, the main text reports parametric rather than hardware-specific expressions.

\subsection{Error-budget allocation}

The separation
\(
\epsilon_{\mathrm{tot}}
=
\epsilon_{\mathrm{factor}}
+
\epsilon_{\mathrm{block}}
+
\epsilon_{\mathrm{mux}}
+
\epsilon_{\mathrm{qsp}}
\)
introduced in Sec.~\ref{sec:composer} permits flexible allocation of error budget across algorithmic layers. While an equal split among contributions provides a convenient baseline, practical implementations may benefit from uneven distributions. For example, on hardware with native arbitrary-angle single-qubit rotations, it may be advantageous to assign a tighter tolerance to $\epsilon_{\mathrm{block}}$ while allowing a larger $\epsilon_{\mathrm{qsp}}$, or vice versa depending on routing constraints and noise characteristics.

The \texttt{COMPOSER} framework does not assume an optimal allocation a priori; rather, it exposes independent control knobs that can be tuned to minimize wall-clock runtime under hardware-specific constraints.

\subsection{Meaning of ``compile--once''}

As seen in Algorithm~\ref{alg:composer_compile_dial}, ``compile-once'' in this work denotes invariance of the logical two-qubit circuit topology under instance updates; it does not preclude hardware-level recompilation, pulse recalibration, or changes in fault-tolerant synthesis overhead. Subsequent updates to molecular geometry, active space, or truncation mask modify only single-qubit rotation angles and classical control data, without altering the two-qubit connectivity graph. This distinction is independent of the underlying fault-tolerant or NISQ execution model and should be understood as a circuit-scheduling concept rather than a claim about physical recompilation.

\begin{algorithm}[H]
\begin{algorithmic}[1]
\Statex \textbf{Compile stage} (once per orbital pool / qubit mapping):
\State Build rank-one ladder pools for $\hat H$ and (optionally) $\hat\sigma$ via nested factorizations; assign selector addresses.
\State Fix ladder pivots and routing schedule; synthesize the adaptor bank, \texttt{SELECT} wiring, and the reusable two-qubit fabric.
\State Choose global normalizations $(\alpha,\bar\alpha')$ and precompute the QSP phase list $\boldsymbol{\phi}$ for exponentiating the generator.
\State Hardware-map/optimize this two-qubit topology once; store the circuit skeleton with parameter slots.
\Statex
\Statex \textbf{Dial stage} (per geometry / mask / truncation update):
\State Recompute classical coefficients $\{\Omega_s\},\{\omega_s\}$ and choose masks $\mathcal M^{(m)}$ (and a model-space projector $P^{(m)}$).
\State Update only single-qubit rotation angles/phases in \texttt{PREP} and local ladder blocks (and, in FT settings, resynthesize rotations to the desired precision).
\State Execute the fixed skeleton to implement $W_{\mathrm{eff}}^{(m)}=U_{\hat\sigma}^{(m)\dagger} W U_{\hat\sigma}^{(m)}$ and downstream routines.
\end{algorithmic}
\caption{\texttt{COMPOSER}: compile stage vs dial stage (``compile-once, dial-later'').}\label{alg:composer_compile_dial}
\end{algorithm}

%%%%%%%%%%%%%%%%%%%%%%%%%%%%%%%%

\section{Diagnostics for Fixed Rank--One Algebraic Formulations}
\label{app:fixed_rank1}

This appendix provides supporting analysis for the mask construction and algebraic freezing assumptions used in \texttt{COMPOSER}. The material here is not required for the construction of the block-encoding pipeline, but serves to justify the stability of rank-one operator subspaces and the use of MP2-guided truncation strategies in adaptive similarity transformations.

\medskip
Within the unitary coupled-cluster doubles (UCCSD) approximation, the anti-Hermitian doubles generator may be written as in Eq.~\eqref{eq:op_sigma},
\begin{align}
\hat{\sigma}_2 = \sum_{s} \omega_s \bigl(L_s - L_s^\dagger\bigr),
\end{align}
{\color{black}
where each pair-excitation rank-one operator takes the (antisymmetric) form
\begin{align}
\hat{L}_s
&:= \hat B^\dagger[U^{(s)}]\;\hat B[V^{(s)}], \label{eq:Ls_pair_def}
\end{align}
with $B^\dagger[U^{(s)}]$ and $B[V^{(s)}]$ defined in Eq.~\eqref{eq:pair_ops}, and $U^{(s)}\in \wedge^2\mathbb C^{N_V}$ and $V^{(s)}\in \wedge^2\mathbb C^{N_O}$ the antisymmetric pair tensors ($U^{(s)}_{ab}=-U^{(s)}_{ba}$, $V^{(s)}_{ij}=-V^{(s)}_{ji}$). Here $\wedge^2\mathbb C^{N_V}$ denotes the antisymmetric pair space of dimension $\binom{N_V}{2}$. When desired, each antisymmetric tensor may be parameterized by two single-particle vectors via a wedge product, e.g., $U^{(s)}_{ab}=(x_a^{(s)}y_b^{(s)}-x_b^{(s)}y_a^{(s)})/\sqrt{2}$, and similarly for $V^{(s)}$, matching Sec.~\ref{subsec:cluster}.}

In practical applications, it is desirable to fix the algebraic structure defined by the vectors $\{U^{(s)},V^{(s)}\}$ and optimize only the scalar coefficients $\{\omega_s\}$. The following sections provide diagnostics and heuristics for assessing when this approximation is justified.

\subsection{Subspace stability diagnostics}

A natural diagnostic for algebraic stability is the robustness of the occupied and virtual subspaces under the similarity transformation. We therefore monitor the one-particle reduced density matrices
\begin{align}
(D^{\mathrm{occ}})_{ij}
&= \bra{\phi_0} e^{-\hat{\sigma}_2} a_j^\dagger a_i e^{\hat{\sigma}_2} \ket{\phi_0},
\label{eq:A_occ} \\
(D^{\mathrm{vir}})_{ab}
&= \bra{\phi_0} e^{-\hat{\sigma}_2} a_b^\dagger a_a e^{\hat{\sigma}_2} \ket{\phi_0},
\label{eq:A_vir}
\end{align}
where $\ket{\phi_0}$ is a reference determinant.

Stability of the algebraic formulation is indicated when the relative change in these density matrices remains small between successive updates,
\begin{align}
\frac{\|D^{\mathrm{occ,new}}-D^{\mathrm{occ,ref}}\|_F}
     {\|D^{\mathrm{occ,ref}}\|_F} \ll 1, \\
\frac{\|D^{\mathrm{vir,new}}-D^{\mathrm{vir,ref}}\|_F}
     {\|D^{\mathrm{vir,ref}}\|_F} \ll 1,
\end{align}
where $\|\cdot\|_F$ denotes the Frobenius norm. When these conditions are satisfied, the eigenbases defining $\{U^{(s)},V^{(s)}\}$ remain sufficiently invariant, justifying a fixed algebraic representation. If substantial subspace rotation is detected, the algebraic vectors may be recomputed periodically without altering the overall circuit topology.

\subsection{Subspace-overlap metric for rank-one manifolds (wAUC)}
\label{app:wAUC}

To compare MP2- and CCSD-derived \emph{rank-one} excitation manifolds, we reshape the doubles tensor into a matrix $T\in\mathbb C^{N_{V\!p}\times N_{O\!p}}$ with $N_{V\!p}=\binom{N_V}{2}$ and $N_{O\!p}=\binom{N_O}{2}$, indexed by antisymmetric pairs $A\equiv(a<b)$ and $I\equiv(i<j)$. Let the singular value decompositions be $T^{\mathrm{CCSD}}=\sum_{k} s_k\, u_k v_k^\dagger$ and $T^{\mathrm{MP2}}=\sum_{k} \tilde s_k\, \tilde u_k \tilde v_k^\dagger$. The corresponding rank-one operator basis vectors in the vectorized matrix space are $b_k := \mathrm{vec}(u_k v_k^\dagger)= v_k^*\otimes u_k$ and similarly $\tilde b_k$.

For a chosen rank $r$, define the orthonormal bases $B_r=[b_1,\ldots,b_r]$ and $\tilde B_r=[\tilde b_1,\ldots,\tilde b_r]$ and the subspace overlap
\begin{align}
\mathrm{ov}(r)
:=\frac{1}{r}\bigl\|B_r^\dagger \tilde B_r\bigr\|_F^2
=
\frac{1}{r}\sum_{j=1}^{r}\cos^2\!\theta_j,
\end{align}
where $\{\theta_j\}$ are principal angles between the two $r$-dimensional subspaces.

We then define a weighted average overlap (wAUC) up to rank $R$ as
\begin{align}
\mathrm{wAUC}(R)
:=\sum_{r=1}^{R} w_r\,\mathrm{ov}(r)
\end{align}
with $w_r:=\frac{s_r^2}{\sum_{k=1}^{R} s_k^2}$, so that leading singular components contribute most strongly (analogous to explained-variance weighting). In Sec.~\ref{sec:numerics}, $R=R_{\varepsilon}$ is determined by the relative screening criterion $s_r/s_1\ge \varepsilon_s$.

\subsection{MP2-guided construction of truncated rank--one operator lists}

In weak to moderate correlation regimes, the leading structure of the UCCSD doubles operator is well captured by second-order M{\o}ller--Plesset (MP2) theory. 
{\color{black}
These diagnostics are particularly relevant for Schrieffer--Wolff-style downfolding workflows, where one repeatedly updates generator coefficients (and masks/model spaces) while aiming to preserve a fixed operator manifold and execution topology.}

The MP2 doubles amplitudes
\begin{align}
t_{ab,ij}^{\mathrm{MP2}}
=\tfrac{1}{4}\cdot
\frac{\langle ij || ab \rangle}
     {\varepsilon_i+\varepsilon_j-\varepsilon_a-\varepsilon_b}
\label{eq:A_mp2}
\end{align}
provide a low-cost proxy for identifying dominant excitation patterns {\color{black}
(If one restricts sums to $i<j$ and $a<b$, the factor of $1/4$ is omitted; equivalently it may be absorbed into the definition of $t_{ab,ij}^{\mathrm{MP2}}$).}
Applying the same nested SVD--eigendecomposition used in the main text to $t_{ab,ij}^{\mathrm{MP2}}$ yields an initial rank-one operator list whose algebraic structure is typically stable under subsequent correlation refinement.

\paragraph{Ladder weights.}
{\color{black}
As a lightweight ranking heuristic (not a rigorous bound), we assign each rank-one ladder a scalar weight that correlates with its expected contribution to orbital-subspace mixing. A basis-invariant choice consistent with the pair-tensor form in Eq.~\eqref{eq:Ls_pair_def} is}
\begin{align}
w_s^{\mathrm{MP2}}
:=
|\omega_s|^{2}\,\|U^{(s)}\|_F^2\,\|V^{(s)}\|_F^2,
\end{align}
{\color{black}
where $\|\cdot\|_F$ denotes the Frobenius norm of the antisymmetric pair tensors (If $U^{(s)}$ and $V^{(s)}$ are normalized by construction, this reduces to $w_s^{\mathrm{MP2}}\propto|\omega_s|^2$). This weight is used only to rank candidate terms and to define cumulative coverage targets in the one-shot mask selection below.}

\paragraph{One-shot truncation algorithm.}
\begin{enumerate}
\item \textbf{Factorization.} Apply a pivoted approximate SVD to $t_{ab,ij}^{\mathrm{MP2}}$, retaining singular values above a threshold $\tau_{\mathrm{SVD}}$, and eigendecompose the resulting vectors with cut-off $\tau_{\mathrm{ED}}$ to obtain provisional rank-one operators.
\item \textbf{Weight evaluation.} Compute $w_s^{\mathrm{MP2}}$ for each term and sort the list by decreasing weight.
\item \textbf{Mask selection.} {\color{black}Choose the smallest subset $S$ such that the cumulative weight coverage satisfies $\sum_{s\in S} w_s^{\mathrm{MP2}}/\sum_{s} w_s^{\mathrm{MP2}} \ge \eta$ (e.g., $\eta=0.99$); the corresponding labels define a selector mask $\mathcal M$.}
\item \textbf{Optional validation.} After quantum optimization, recompute density matrices with tighter thresholds to verify that subspace deviations remain within tolerance.
\end{enumerate}

{\color{black}
In the benchmark set of Sec.~\ref{sec:numerics}, retaining only the largest $5$--$10\%$ of MP2-derived weights typically reproduces both occupied and virtual subspaces to within $10^{-3}$ in relative Frobenius norm, while maintaining a fixed selector width and circuit topology throughout adaptive iterations.}

%%%%%%%%%%%%%%%%%%%%%%%%%%%%%%%%%%%%%%

\bibliographystyle{unsrt}
\bibliography{ref}

\end{document}